\documentclass[a4paper,10pt,reqno]{amsart}
\usepackage{amsmath,amssymb,amsthm}
\usepackage{amsaddr}
\usepackage{amsfonts}
\usepackage{multirow,bigdelim}
\usepackage{diagbox}
\usepackage{comment}
\usepackage{enumerate}
\usepackage{url}
\usepackage[lined,linesnumbered,boxed]{algorithm2e}
\usepackage{graphicx}
\usepackage[labelformat=simple]{subcaption}
\newcommand{\order}{{\rm O}}

\newcommand{\OPT}{{\rm OPT}}

\newcommand{\RSC}{{\rm RSC}}

\newcommand{\lca}{{\rm lca}}
\newcommand{\SC}{{\rm SC}}

\newcommand{\real}{{\mathbb R}}
\newcommand{\zahl}{{\mathbb Z}}
\newcommand{\natu}{{\mathbb N}}

\SetKwInput{KwIn}{Input}
\SetKwInput{KwOut}{Output}
\SetKw{KwBreak}{break}
\SetKwRepeat{Do}{do}{while}
\SetAlCapSkip{0.5em}
\SetAlCapSty{}
\theoremstyle{plain}
\newtheorem{theorem}{Theorem}[section]
\newtheorem{lemma}[theorem]{Lemma}
\newtheorem{proposition}[theorem]{Proposition}
\newtheorem{corollary}[theorem]{Corollary}

\theoremstyle{remark}

\newtheorem{defe}[theorem]{Definition} 
\allowdisplaybreaks[1]
\begin{document}
\title[Characterizing Admissible Objective Functions]{Characterizing Admissible Objective Functions for Hierarchical Clustering}
\author{Kazutoshi Ando$^*$}
\address{Faculty of Engineering, Shizuoka University,               Johoku 3-5-1, 432-8561  
		Hamamatsu, Japan
%              Tel.: +81-53-478-1263\\
%              Fax: +81-53-478-1263\\
}
\email{ando.kazutoshi@shizuoka.ac.jp}
\thanks{$^*$Corresponding author: \texttt{ando.kazutoshi@shizuoka.ac.jp}}
%\thanks{Supported by JSPS KAKENHI Grant Numbers 18K11180 and 22K11921.}
\author{Ryuki Tsukuba}
\address{Graduate School of Integrated Science and Technology, 
Shizuoka University, Hamamatsu, Shizuoka 432-8561, Japan}
\email{tsukuba.ryuki.19@shizuoka.ac.jp}
\date{June 15, 2026}
\keywords{hierarchical clustering; admissible objective function; approximation algorithm; sparsest cut}
\subjclass[2010]{Primary~62H30; Secondary~51K05}
% 54E15, 05C65(hypergraphs),91C20(clustering), 62H30(classification), 51K05 (distance geometry - generaly theory)
\begin{abstract}
Hierarchical clustering is a fundamental task in data analysis, but classical methods have long lacked a principled objective function. Dasgupta~[STOC~2016] took an important step toward addressing this gap by proposing a well-motivated objective function for cluster trees. 
Cohen-Addad et al.~[J. ACM 2019] subsequently introduced the notion of admissibility: an objective function is admissible if, whenever the input similarity matrix admits generating trees, its minimizers are precisely those generating trees.
They also gave a necessary and sufficient condition for admissibility within a family of objective functions based on aggregate intercluster similarity. We refer to this family as \emph{sum-type} objective functions. However, apart from Dasgupta's original objective function, no explicit admissible objective functions in this family were provided.
\par
In this paper, we study admissible objective functions for hierarchical clustering in two directions. For sum-type objective functions, we give a complete characterization when the scaling function is a symmetric polynomial of degree at most two, and we derive sufficient conditions for degree-three polynomials. We also show that the recursive sparsest cut 
algorithm achieves an $\order(\phi)$-approximation ratio for the admissible objective functions covered by our characterization, where $\phi$ is the approximation factor of the sparsest cut subroutine. We then introduce \emph{max-type} objective functions, where cluster interaction is measured by maximum, rather than aggregate, intercluster similarity. 
For this class, we characterize which objective functions are admissible for arbitrary symmetric scaling functions and give a complete characterization when the scaling function is a symmetric polynomial of degree at most two.
\end{abstract}
\maketitle
%%%%%%%%%%%%%%%%%%%%%%%%%%%%%%%%%%%%%%%%%%%%%%%%%%%%%%%%%%%%%%%%%%%%%%%%%%%%%%%%%%%%%
%%% Introduction
%%%%%%%%%%%%%%%%%%%%%%%%%%%%%%%%%%%%%%%%%%%%%%%%%%%%%%%%%%%%%%%%%%%%%%%%%%%%%%%%
\section{Introduction} \label{chapter:intro}
Hierarchical clustering is a fundamental procedure in data analysis. Given a finite data set $X$ and a similarity matrix $M$ on $X$, the goal is to represent the data by a nested family of clusters, usually encoded by a rooted binary tree whose leaves are the elements of $X$. Here, a similarity matrix is a nonnegative real-valued symmetric function on $X\times X$. Such a tree is called a \emph{cluster tree} on $X$.
\par
Despite their popularity, classical hierarchical clustering methods have long lacked a principled objective function. Dasgupta~\cite{Dasgupta16} took an important step in this direction by proposing a well-motivated objective function for cluster trees. Let $T=(V,E)$ be a cluster tree on $X$. We denote by $V^\circ$ the set of internal nodes of $T$. For each $v\in V$, let $T_v$ be the subtree rooted at $v$, and let $L(T_v)\subseteq X$ denote the set of leaves of $T_v$. For each internal node $v\in V^\circ$, let $v_+$ and $v_-$ be the two children of $v$. Dasgupta's objective function is defined by
\begin{align}\label{Gamma:Dasgupta}
\Gamma(T) = \sum_{v \in V^{\circ}} H_T(v_+, v_-)(|L(T_{v_+})|+|L(T_{v_-})|),
\end{align}
where
\begin{align*}
H_T(v,w)=\sum_{x\in L(T_v),\,y\in L(T_w)}M(x,y)
\end{align*}
denotes the total similarity between the two clusters.
Dasgupta~\cite{Dasgupta16} showed that minimizing this objective function is NP-hard and gave an $\order(\phi \log n)$-approximation algorithm based on recursive sparsest cut, where $\phi$ is the approximation factor of the sparsest cut subroutine. This approximation ratio was later improved to $\order(\phi)$ by Charikar et al.~\cite{CC} and Cohen-Addad et al.~\cite{CKMM}.
\par
Cohen-Addad et al.~\cite{CKMM} subsequently introduced the notion of \emph{admissible} objective functions through the concept of generating trees. A cluster tree $T$ on $X$ is called a \emph{generating tree} of $M$ if there exists a weight function $h\colon V^\circ\to \mathbb{R}_+$ such that
\begin{enumerate}[{\rm (i)}]
\item $h(u)\leq h(v)$ if $u$ is the parent of $v$, and\label{cond1:generatingTree}
\item $h(\lca_T(x,y))=M(x,y)$ for all distinct $x,y\in X$,\label{cond2:generatingTree}
\end{enumerate}
where $\lca_T(x,y)$ denotes the lowest common ancestor of $x$ and $y$ in $T$.
An objective function is called \emph{admissible} if, whenever the similarity matrix $M$ admits generating trees, the minimizers of the objective function are precisely the generating trees of $M$.
\par
We denote by $\natu$ the set of positive integers. Cohen-Addad et al.~\cite{CKMM} characterized admissibility for a family of objective functions of the form
\begin{align}\label{Gamma:sum}
\Gamma(T)=\sum_{v\in V^\circ} H_T(v_+,v_-)\cdot g(|L(T_{v_+})|,|L(T_{v_-})|),
\end{align}
where $g\colon\natu\times\natu\to\mathbb{R}$ is a symmetric function. In this paper, we refer to objective functions of the form~\eqref{Gamma:sum} as \emph{sum-type} objective functions. Dasgupta's objective function corresponds to the choice $g(a,b)=a+b$. Although the characterization of Cohen-Addad et al. applies to this general family of objective functions, apart from Dasgupta's original objective function, no explicit admissible objective functions in this family were provided. This motivates a more detailed study of admissibility for natural subclasses of scaling functions, such as symmetric polynomials of low degree. Moreover, since sum-type objective functions are based on the aggregate similarity $H_T$ between two clusters, it is natural to ask whether analogous admissibility characterizations can be obtained for objective functions based on other notions of intercluster similarity.
\par
In this paper, we first study sum-type objective functions whose scaling function is a symmetric polynomial. We give a necessary and sufficient condition for admissibility when the scaling function $g$ has degree at most two, and we derive sufficient conditions in the degree-three case. We also show that the recursive sparsest cut algorithm achieves an $\order(\phi)$-approximation ratio for the admissible objective functions covered by our characterization.
\par
We then introduce a new class of objective functions, which we call \emph{max-type} objective functions. A max-type objective function is defined by
\begin{align}\label{Gamma:max}
\Gamma(T) = \sum_{v \in V^{\circ}} K_T(v_+, v_-) \cdot g(|L(T_{v_+})|, |L(T_{v_-})|),
\end{align}
where $g\colon\natu\times\natu\to\mathbb{R}$ is a symmetric function and
\begin{align*}
K_T(v,w)=\max\{M(x,y)\mid x\in L(T_v),\,y\in L(T_w)\}.
\end{align*}
While sum-type objective functions aggregate all pairwise similarities between two clusters through $H_T$, max-type objective functions use only the maximum similarity between the two clusters. For this class, we characterize which objective functions are admissible for arbitrary symmetric scaling functions and give a complete characterization when the scaling function is a symmetric polynomial of degree at most two.
\par
The remainder of this paper is organized as follows. Section 2 studies sum-type objective functions and presents admissibility characterizations for polynomial scaling functions, together with the approximation guarantee for the recursive sparsest cut algorithm. Section 3 introduces max-type objective functions and establishes their admissibility characterizations. Section 4 concludes the paper.
%
%%%%%%%%%%%%%%%%%%%%%%%%%%%%%%%%%%%%%%%%%%%%%%%%%%%%%%%%%%%%%%%%%%%%%%%%%%%%%%%%%%%%%%%%%%%%%%%%%%%%%%%%%%%%%%%%%%%%%%%%%%%%%%%%%%%%%%%
\section{Characterizations of Admissible Sum-Type Objective Functions} \label{chapter:sum}
In this section, we study sum-type objective functions defined in \eqref{Gamma:sum}.
We first characterize admissibility within this class for polynomial scaling functions of low degree.
We then discuss the recursive sparsest cut (RSC) algorithm~\cite{Dasgupta16} and analyze its approximation ratio for the admissible objective functions covered by our characterization.
\subsection{Characterization of Admissible Sum-Type Objective Functions} \label{sec:Chara_sum}
Cohen-Addad et al.~\cite{CKMM} established the following characterization of admissibility for sum-type objective functions for hierarchical clustering.
\begin{proposition}[Cohen-Addad et al.~\cite{CKMM}] \label{prop:admissible_sum}
Let $\Gamma$ be a sum-type objective function, as defined in~\eqref{Gamma:sum}, with scaling function $g\colon\natu\times\natu\to\real$.
Then $\Gamma$ is admissible for every finite set $X$ if and only if it satisfies the following two conditions:
\begin{enumerate}[{\rm(i)}]
\item For every finite set $X$, if $M(x,y)=1$ for all distinct $x,y\in X$ (the uniform similarity case), then $\Gamma(T)$ is constant for all cluster trees $T$ on $X$. \label{cond1:admissible_sum}
\item For any $a,b\in\natu$, $g(a+1,b)>g(a,b)$. \label{cond2:admissible_sum}
\end{enumerate}
\end{proposition}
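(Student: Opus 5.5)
We prove necessity first and then sufficiency. Throughout we use the fact that every internal node $v$ whose subtree has leaf set $A$ contributes $H_T(v_+,v_-)\,g(|A\cap L(T_{v_+})|,|A\cap L(T_{v_-})|)$. Since $H_T$ is linear in $M$, $\Gamma$ is linear in $M$ for each fixed $T$.

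\emph{Necessity.} For (\ref{cond1:admissible_sum}), take the uniform matrix $M\equiv 1$. Every cluster tree is a generating tree of it, via the constant weight $h\equiv 1$. Admissibility therefore forces every tree to be a minimizer, so $\Gamma$ is constant. For (\ref{cond2:admissible_sum}), fix $a,b$ and take $X=A\cup B\cup\{z\}$ with $|A|=a$ and $|B|=b$. Define $M(x,y)=1$ for pairs inside $A\cup\{z\}$ and inside $B$, and $M(x,y)=\varepsilon<1$ for the remaining pairs. Every generating tree splits at the root into $A\cup\{z\}$ and $B$. Compare a generating tree with the tree $T'$ that has the same subtrees below but whose root splits $A$ against $B\cup\{z\}$, where $z$ is attached as a sibling of the $B$-subtree. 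Using (\ref{cond1:admissible_sum}) on the uniform blocks, and sending $\varepsilon\to 0$, strict optimality of generating trees reduces to the inequality $g(a+1,b)>g(a,b)$. The bookkeeping needed to make $\varepsilon$ positive while still isolating exactly this term is the fiddly part. An alternative is to compare the root-level terms directly, since the two trees share the block structure.

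\emph{Sufficiency.} Assume (\ref{cond1:admissible_sum}) and (\ref{cond2:admissible_sum}), and let $M$ admit a generating tree $T^*$ with weights $h$. The key step is the ultrametric decomposition. Order the distinct values $0\le w_1<\dots<w_k$ of $h$. Then $M=\sum_{i}(w_i-w_{i-1})\,M_i$, where $M_i(x,y)=1$ exactly when $x$ and $y$ lie in the same maximal subtree of $T^*$ all of whose internal weights are at least $w_i$. Each $M_i$ is a disjoint union of uniform blocks, and by linearity $\Gamma_M(T)=\sum_i (w_i-w_{i-1})\Gamma_{M_i}(T)$. It therefore suffices to show two things for a block-uniform matrix $N$ (partition $X=C_1\cup\dots\cup C_r$, with $N=1$ inside blocks and $0$ across). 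First, $\Gamma_N(T)\ge \sum_j c(|C_j|)$, where $c(n)$ is the constant value given by (\ref{cond1:admissible_sum}). Second, equality holds iff every $C_j$ is a cluster of $T$, that is, equals $L(T_v)$ for some $v$.

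For the lower bound, restrict $T$ to $C_j$ by deleting other leaves and contracting unary nodes, obtaining a tree $T|_{C_j}$. Each internal node of $T|_{C_j}$ corresponds to a node $v$ of $T$ whose children contain $a'\ge a$ and $b'\ge b$ leaves, where $a$ and $b$ count only the points of $C_j$. Its $N$-contribution is $H\cdot g(a',b')\ge H\cdot g(a,b)$ by (\ref{cond2:admissible_sum}), applied together with symmetry of $g$. The inequality is strict whenever $H>0$ and $(a',b')\neq(a,b)$. Summing and applying (\ref{cond1:admissible_sum}) to $T|_{C_j}$ gives the bound. Equality forces, at every such node, no foreign leaves, and since $H>0$ always holds in a uniform block, it forces $C_j$ to be a cluster.

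Finally, $T$ is a minimizer iff equality holds for each $M_i$ with positive coefficient, that is, iff every block of every $M_i$ is a cluster of $T$. These blocks are exactly the clusters of $T^*$ at nodes where $h$ strictly increases. A tree containing all of them is generating, since one assigns $h$ by level. Conversely, any generating tree contains them. The main obstacle is this last equivalence: handling ties in $h$ and the $w_1=0$ level correctly, so that the minimizers are precisely the generating trees.
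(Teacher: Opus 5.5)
\textbf{Gap in the necessity of (ii).} The problem is the test matrix, not the $\varepsilon$-bookkeeping. Because $z$ has similarity $1$ to all of $A$, moving $z$ across the top cut is charged at full weight. Take $T$ whose root splits $A\cup\{z\}$ from $B$ and whose $A\cup\{z\}$-subtree splits $A$ from $\{z\}$, and take $T'$ as you describe. The shared subtrees $T_A,T_B$ cancel, and
\[
\Gamma(T')-\Gamma(T)=a\bigl[g(a,b+1)-g(a,1)\bigr]+\varepsilon\bigl[ab\,g(a,b+1)+b\,g(1,b)-(a+1)b\,g(a+1,b)\bigr].
\]
The value $g(a,b)$ never appears, so no handling of $\varepsilon$ can produce $g(a+1,b)>g(a,b)$. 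Letting $\varepsilon\to0$ (or setting $\varepsilon=0$) gives only $g(a,b+1)\ge g(a,1)$ (resp.\ $>$), which compares the wrong values.

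\textbf{The fix.} Make the extra point an outlier rather than a member of a block. Put $M\equiv 1$ on all pairs inside $A\cup B$ and $M(z,x)=0$ for every $x\neq z$; zero similarities and zero weights are allowed.
\begin{itemize}
\item Let $T$ have a root separating $\{z\}$ from $A\cup B$, a next node separating $A$ from $B$, and $T_A,T_B$ below. It is generating, with weight $0$ at the root and $1$ elsewhere.
\item Let $T'$ have a root separating $A\cup\{z\}$ from $B$, with the root of $T_A$ and the leaf $z$ as children of the left child. It is not generating. Its root is $\lca_{T'}(x,y)$ for $x\in A$, $y\in B$, which forces weight $1$, while the root's child is $\lca_{T'}(x,z)$, which forces weight $0$.
\end{itemize}
Then $\Gamma(T')-\Gamma(T)=ab\,\bigl[g(a+1,b)-g(a,b)\bigr]$ exactly. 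Admissibility gives $\Gamma(T')>\Gamma(T)$, hence (ii) for all $a,b\in\natu$, with no limiting argument.

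\textbf{The rest is sound.} Condition (i) is necessary for the reason you give. Your sufficiency proof is correct: you write $M$ as a nonnegative combination of block-uniform matrices $M_i$ and bound each $\Gamma_{M_i}(T)$ through the restrictions $T|_{C_j}$ using (i) and (ii). The paper quotes this proposition from Cohen-Addad et al.\ without proof, so your argument is a self-contained route.

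The final equivalence you flag as the main obstacle is routine:
\begin{itemize}
\item Every internal node of $T^*$ is the lca of some pair, so $M_i(x,y)=1$ iff $M(x,y)\ge w_i$. Hence the blocks depend only on $M$, and every generating tree contains them.
\item $M_1$ has the single block $X$, so a zero coefficient $w_1$ loses nothing.
\item If $T$ contains all blocks, define $h_T(v)=\max\{w_i : L(T_v)\text{ lies in a block of }M_i\}$. This is monotone. It also satisfies $h_T(\lca_T(x,y))=M(x,y)=w_p$: the $M_p$-block containing $x,y$ is a cluster of $T$ and contains $L(T_{\lca_T(x,y)})$, whereas $x$ and $y$ lie in different $M_{p+1}$-blocks.
\end{itemize}
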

\par
Building upon Proposition~\ref{prop:admissible_sum}, we study sum-type objective functions whose scaling function $g$ is a symmetric polynomial of degree at most three.
\begin{lemma} \label{lem:Gamma_clique_sum}
Let $\Gamma$ be a sum-type objective function with scaling function $g\colon\natu\times\natu\to\real$. Suppose $g$ is defined by
\begin{align} \label{equ:g=3jiika}
g(a, b) = \lambda((a+b)^3 - (a+b)ab) + \mu(2(a+b)^2 - ab) + \nu(a+b)
\end{align}
for some $\lambda, \mu, \nu \in \real$. 
Then, for every finite set $X$ and every similarity matrix $M$ on $X$ satisfying $M(x, y) = 1$ for all distinct $x, y \in X$, every cluster tree $T$ on $X$ satisfies
\begin{align} \label{equ:Gamma_clique_sum}
\Gamma(T) = \frac{\lambda}{5}(|X|^5 - |X|) + \frac{\mu}{2}(|X|^4 - |X|) + \frac{\nu}{3}(|X|^3 - |X|).
\end{align}
\end{lemma}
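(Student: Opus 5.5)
Under uniform similarity, $H_T(v_+,v_-)=|L(T_{v_+})|\cdot|L(T_{v_-})|$, so
\[
\Gamma(T)=\sum_{v\in V^\circ} ab\,g(a,b),\qquad a=|L(T_{v_+})|,\ b=|L(T_{v_-})|.
\]
The plan is to find a function $F\colon\natu\to\real$ with $F(1)=0$ and
\[
F(a+b)-F(a)-F(b)=ab\,g(a,b)\qquad\text{for all } a,b\in\natu .
\]
With such an $F$, induction on $|X|$ (or telescoping over internal nodes) gives $\Gamma(T)=F(|X|)$ for every cluster tree $T$. Write $n=|X|$. A single leaf contributes $F(1)=0$. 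An internal node $v$ with children of sizes $a,b$ contributes $ab\,g(a,b)$, which equals $F(a+b)-F(a)-F(b)$, and these differences telescope to $F$ evaluated at the root. By linearity in $(\lambda,\mu,\nu)$ it suffices to treat the three terms of \eqref{equ:g=3jiika} separately. The candidates are $F_1(n)=(n^5-n)/5$, $F_2(n)=(n^4-n)/2$ and $F_3(n)=(n^3-n)/3$, each of which vanishes at $n=1$.

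The key step is verifying the three polynomial identities. Let $s=a+b$ and $p=ab$. The linear parts of the $F_i$ cancel in $F(a+b)-F(a)-F(b)$ because $s-a-b=0$, so only the power sums matter. The needed expansions in $s,p$ are:
\begin{align*}
s^3-(a^3+b^3)&=3sp,\\
s^4-(a^4+b^4)&=4s^2p-2p^2,\\
s^5-(a^5+b^5)&=5s^3p-5sp^2.
\end{align*}
These follow from Newton's identities, for example $a^4+b^4=(s^2-2p)^2-2p^2$ and $a^5+b^5=s^5-5s^3p+5sp^2$.

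Dividing gives the three checks:
\begin{itemize}
\item For $F_3$: $\tfrac13\cdot 3sp=p\cdot s$, matching the $\nu$-term $g=s$.
\item For $F_2$: $\tfrac12(4s^2p-2p^2)=p(2s^2-p)$, matching the $\mu$-term $g=2(a+b)^2-ab$.
\item For $F_1$: $\tfrac15(5s^3p-5sp^2)=p(s^3-sp)$, matching the $\lambda$-term $g=(a+b)^3-(a+b)ab$.
\end{itemize}

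Summing with coefficients $\lambda,\mu,\nu$ yields \eqref{equ:Gamma_clique_sum}. The only real work is the power-sum expansions in $s$ and $p$ (in particular the degree-five one), and these are routine; there is no conceptual obstacle once the telescoping potential $F$ is identified.
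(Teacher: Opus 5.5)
Your proof is correct and follows the paper's argument: the paper inducts on $|X|$ by splitting at the root, $\Gamma(T)=ab\,g(a,b)+\Gamma(T_{r_+})+\Gamma(T_{r_-})$, which is exactly your telescoping identity $F(a+b)=ab\,g(a,b)+F(a)+F(b)$ with $F(n)$ the right-hand side of the claimed formula. Your explicit power-sum expansions in $s=a+b$, $p=ab$ are correct, and they spell out the ``substituting and simplifying'' step that the paper leaves to the reader.
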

\begin{proof}
Let $T$ be an arbitrary cluster tree on $X$ and $n = |X|$. Suppose $M(x, y) = 1$ for all distinct $x, y \in X$. We proceed by induction on $n$.
For $n = 2$, there exists a unique cluster tree $T$ consisting of one root and two leaves. In this case,
\begin{align*}
\Gamma(T) = 1 \cdot g(1, 1) = 6\lambda + 7\mu + 2\nu = \frac{\lambda}{5}(2^5 - 2) + \frac{\mu}{2}(2^4 - 2) + \frac{\nu}{3}(2^3 - 2),
\end{align*}
confirming the base case. Assume the statement holds for $n \leq k - 1$ where $k \geq 3$. For $n = k$, let the root of $T$ be $r$, with $a = |L(T_{r_+})|$ and $b = |L(T_{r_-})|$. By the induction hypothesis:
\begin{align*}
\Gamma(T)
&= H_T(r_+, r_-) \cdot g(a, b) + \Gamma(T_{r_+}) + \Gamma(T_{r_-})\\
&= ab \cdot g(a, b) + \frac{\lambda}{5}(a^5-a+b^5-b) + \frac{\mu}{2}(a^4-a+b^4-b) + \frac{\nu}{3}(a^3-a+b^3-b).
\end{align*}
Substituting \eqref{equ:g=3jiika} and simplifying the terms (using $a+b=n$), we obtain \eqref{equ:Gamma_clique_sum}. Thus, the lemma holds for all $n$.
\end{proof}
%%%%%%%%%%%%%%%%%%%%%%%%%%%%%%%%%%%%%%%%%%%%%%%%%%%%%%%%%%%%%%%%%<<<<<<<<<<<<<<<<<<<<<<<<<<<
\begin{proposition} \label{prop:nsc(i)}
Let $\Gamma$ be a sum-type objective function whose scaling function 
$g(a,b)$ is a symmetric polynomial of degree at most three. Then $\Gamma$ satisfies condition 
\eqref{cond1:admissible_sum} of Proposition \ref{prop:admissible_sum} if and only if $g$ can be expressed in the form \eqref{equ:g=3jiika} for some 
$\lambda,\mu,\nu\in\real$. 
\end{proposition}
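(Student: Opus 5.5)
The "if" direction is immediate from Lemma~\ref{lem:Gamma_clique_sum}: if $g$ has the form \eqref{equ:g=3jiika}, then in the uniform similarity case $\Gamma(T)$ is given by \eqref{equ:Gamma_clique_sum}, which depends only on $|X|$. Hence condition \eqref{cond1:admissible_sum} of Proposition~\ref{prop:admissible_sum} holds. All the work is in the "only if" direction.

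For the converse, I write a general symmetric polynomial of degree at most three in the basis of elementary symmetric polynomials $s=a+b$ and $p=ab$:
\begin{align*}
g(a,b)=c_0+c_1 s+c_2 s^2+c_3 p+c_4 s^3+c_5 sp ,
\end{align*}
with six real coefficients. The target family \eqref{equ:g=3jiika} is a three-dimensional subspace: $\lambda$ fixes $c_4=\lambda$ and $c_5=-\lambda$, while $\mu$ fixes $c_2=2\mu$ and $c_3=-\mu$, and $c_0=0$. So I must derive three independent linear constraints on the coefficients, namely
\begin{align*}
c_0=0,\qquad c_4+c_5=0,\qquad c_2+2c_3=0,
\end{align*}
from condition \eqref{cond1:admissible_sum}.

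These constraints will come from small instances. With uniform similarity, the value of $\Gamma$ on a tree is computed recursively: the root contributes $ab\,g(a,b)$ plus the values on the two subtrees. For $|X|=4$ there are two tree shapes, the caterpillar with splits $(3,1),(2,1),(1,1)$ and the balanced tree with splits $(2,2),(1,1),(1,1)$. Equality of their values gives
\begin{align*}
3g(3,1)+2g(2,1)=4g(2,2)+g(1,1).
\end{align*}
For $|X|=5$, the root split is either $(4,1)$ or $(3,2)$. Since the value of any tree on $4$ leaves is already constant, this gives
\begin{align*}
4g(4,1)+\Gamma_4=6g(3,2)+\Gamma_3+\Gamma_2,
\end{align*}
where $\Gamma_k$ denotes the common value on $k$ leaves. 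For $|X|=6$, comparing the root splits $(5,1)$, $(4,2)$ and $(3,3)$ yields two further linear equations. Each of these is a linear equation in $c_0,\dots,c_5$.

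The main obstacle is checking that these equations have rank exactly three and cut out precisely the subspace above, i.e. that they are equivalent to the three constraints. I will not rely on guessing. Instead I will use the general principle that the constant-value condition is equivalent to the existence of a function $F$ with
\begin{align*}
F(a+b)=ab\,g(a,b)+F(a)+F(b)\quad\text{for all } a,b\ge 1 .
\end{align*}
Under this condition $F$ must be a polynomial of degree at most five with no constant term. I would plug in such an $F$, or more concretely substitute each monomial $c_i$-basis element and compute its "defect" $ab\,g(a,b)-[F(a+b)-F(a)-F(b)]$. I expect that the three combinations $1$, $s^3+sp$ and $s^2+2p$ (suitably adjusted) cannot be absorbed, which yields the three constraints. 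Concretely, I would verify that the $4\times 6$ coefficient matrix of the equations from $n=4,5,6$ has rank $3$ with kernel spanned by the three generators in \eqref{equ:g=3jiika}. Since the kernel contains those generators by Lemma~\ref{lem:Gamma_clique_sum}, it suffices to exhibit a nonzero $3\times3$ minor in the columns of $c_0,c_2,c_4$, which is a routine numeric computation.
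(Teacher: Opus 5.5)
Your ``if'' direction is fine, and your equations for $|X|=4$ and $|X|=5$ are exactly the paper's \eqref{eq:3241} and \eqref{eq:2231}; the paper reads both off the three $5$-leaf trees. The step that fails is the constraint $c_0=0$. It cannot be derived from condition \eqref{cond1:admissible_sum}, because it is false.

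For $g\equiv c_0$ and uniform similarity, every cluster tree gives $\Gamma(T)=c_0\sum_{v\in V^\circ}|L(T_{v_+})|\,|L(T_{v_-})|=c_0\binom{|X|}{2}$, since each unordered pair has exactly one lowest common ancestor. In your language, $F(n)=\binom{n}{2}$ absorbs the monomial $1$, because $\binom{a+b}{2}-\binom{a}{2}-\binom{b}{2}=ab$.

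This has several consequences for your plan:
\begin{itemize}
\item Every linear equation extracted from condition \eqref{cond1:admissible_sum} has zero coefficient on $c_0$, and also on $c_1$.
\item The solution space contains the four-dimensional span of $1$, $s$, $s^3-sp$, $2s^2-p$, so the constraint matrix has rank $2$, not $3$.
\item The nonzero $3\times 3$ minor in the columns of $c_0,c_2,c_4$ that you plan to exhibit does not exist.
\end{itemize}
Concretely, your two equations reduce to
\begin{align*}
c_2+2c_3+9(c_4+c_5)=0,\qquad c_2+2c_3+12(c_4+c_5)=0,
\end{align*}
and the $|X|=6$ equations add nothing new.

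Read literally with a constant term allowed, the statement is false: $g\equiv 1$ is a counterexample. The paper's proof avoids this only because its general form \eqref{equ:3.3} silently omits the constant term.

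To repair your argument, you have two options:
\begin{itemize}
\item Adopt the paper's convention and drop $c_0$ from the start.
\item Weaken the conclusion to ``$g$ is of the form \eqref{equ:g=3jiika} plus an additive constant.'' The ``if'' direction still holds, by adding $c_0\binom{|X|}{2}$ to \eqref{equ:Gamma_clique_sum}.
\end{itemize}
Either way only two constraints are needed. Subtracting the two displayed equations gives $c_4+c_5=0$ and then $c_2+2c_3=0$, which is exactly the paper's computation. The $|X|=6$ instances and the polynomial-$F$ framework can be dropped.
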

\begin{proof}
The sufficiency follows directly from Lemma \ref{lem:Gamma_clique_sum}. To prove necessity, 
let $g(a, b)$ be an arbitrary symmetric polynomial of degree at most three, which can be written in the form 
\begin{align}\label{equ:3.3}
g(a, b) = \lambda_1(a+b)^3 + \lambda_2(a+b)ab + \mu_1(a+b)^2 + \mu_2ab + \nu(a+b).
\end{align}
Since condition~\eqref{cond1:admissible_sum} is assumed to hold for every finite set $X$, we may in particular consider the case $|X|=5$. 
In this case, there are exactly three distinct (non-isomorphic) cluster trees, denoted by $T_1,T_2,$ and $T_3$, 
as shown in Figure~\ref{fig:tree5_max}. Assuming uniform similarity, that is, $M\equiv 1$, 
their objective values are
\begin{align*}
\Gamma(T_1) &= 4g(4, 1) + 3g(3, 1) + 2g(2, 1) + g(1, 1), \\
\Gamma(T_2) &= 4g(4, 1) + 4g(2, 2) + 2g(1, 1), \\
\Gamma(T_3) &= 6g(3, 2) + 2g(2, 1) + 2g(1, 1).
\end{align*}
The condition that $\Gamma(T_1) = \Gamma(T_2) = \Gamma(T_3)$ implies 
\begin{align}
3g(3, 1) + 2g(2, 1) &= 4g(2, 2) + g(1, 1), \label{eq:3241}\\
2g(4, 1) + 2g(2, 2) &= 3g(3, 2) + g(2, 1).\label{eq:2231}
\end{align}
Substituting \eqref{equ:3.3} into \eqref{eq:3241} and \eqref{eq:2231} yields $\lambda_1 + \lambda_2 = 0$ and 
$\mu_1 + 2\mu_2 = 0$, which implies that $g$ 
must take the form~\eqref{equ:g=3jiika}.
\end{proof}
\begin{figure}[th]
\centering
\begin{minipage}[b]{0.3\hsize}
\centering
\includegraphics[scale=0.65]{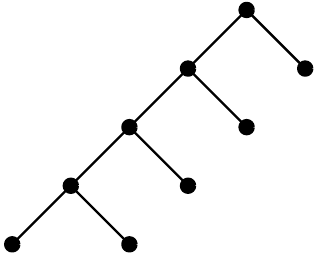}
\subcaption{}
\end{minipage}
\begin{minipage}[b]{0.3\hsize}
\centering
\includegraphics[scale=0.65]{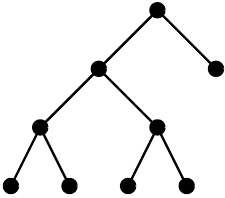}
\subcaption{}
\end{minipage}
\begin{minipage}[b]{0.3\hsize}
\centering
\includegraphics[scale=0.65]{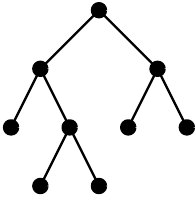}
\subcaption{}
\end{minipage}
\caption{Cluster trees on $X$ with $|X|=5$: (a) $T_1$; (b) $T_2$; (c) $T_3$.}
\label{fig:tree5_max}
\end{figure}
%%%%%%%%%%%%%%%%%%%%%%%%%%%%%%%%%%%%%%%%%%%%%%%%%%%%%%%%%%%%%%%%%%%%%%%%<<<<<<
\par 
Proposition~\ref{prop:nsc(i)} restricts the possible forms of polynomial scaling
functions satisfying the uniform-similarity condition.
We next examine which of these functions satisfy the monotonicity condition,
leading to admissible objectives.
\begin{proposition}\label{theo:3jiika_sum}
Let $g(a,b)$ be a symmetric polynomial of degree at most three. A sufficient condition for a sum-type objective function 
$\Gamma$ to be admissible for every finite set $X$ is that $g$ is of the form \eqref{equ:g=3jiika} with constants $\lambda, \mu, \nu$ satisfying:
\begin{align}\label{equ:3jiika_keisu}
\lambda \geq 0, \quad \mu \geq 0, \quad 15\lambda + 9\mu + \nu > 0.
\end{align}
\end{proposition}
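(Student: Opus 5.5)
The plan is to invoke Proposition~\ref{prop:admissible_sum}, which reduces admissibility for every finite $X$ to two conditions. Condition~\eqref{cond1:admissible_sum} (constancy under uniform similarity) holds immediately for every $g$ of the form~\eqref{equ:g=3jiika}, by Lemma~\ref{lem:Gamma_clique_sum} (equivalently, the sufficiency part of Proposition~\ref{prop:nsc(i)}). So it only remains to verify the monotonicity condition~\eqref{cond2:admissible_sum}: $g(a+1,b)>g(a,b)$ for all $a,b\in\natu$, under the hypotheses~\eqref{equ:3jiika_keisu}.

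To do this, I will write $s=a+b$ and $p=ab$, so that $g=\lambda(s^3-sp)+\mu(2s^2-p)+\nu s$. Replacing $a$ by $a+1$ sends $s\mapsto s+1$ and $p\mapsto p+b$. The difference therefore splits into three pieces, one per coefficient:
\begin{align*}
g(a+1,b)-g(a,b)=\lambda\,D_\lambda(a,b)+\mu\,D_\mu(a,b)+\nu,
\end{align*}
with
\begin{align*}
D_\lambda(a,b) &= 3s^2+3s+1-p-(s+1)b,\\
D_\mu(a,b) &= 4s+2-b.
\end{align*}
Re-expanding in $a,b$ gives
\begin{align*}
D_\lambda(a,b) &= 3a^2+4ab+2b^2+3a+2b+1,\\
D_\mu(a,b) &= 4a+3b+2.
\end{align*}

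Both expressions have only nonnegative coefficients, so they are nondecreasing in $a$ and $b$ on $\natu\times\natu$. Hence they are minimized at $a=b=1$, where $D_\lambda(1,1)=15$ and $D_\mu(1,1)=9$. Since $\lambda\ge 0$ and $\mu\ge 0$, we get
\begin{align*}
g(a+1,b)-g(a,b)\ \ge\ 15\lambda+9\mu+\nu\ >\ 0
\end{align*}
for all $a,b\in\natu$, which is condition~\eqref{cond2:admissible_sum}. Proposition~\ref{prop:admissible_sum} then yields admissibility.

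The only real obstacle is the bookkeeping in the expansion of $D_\lambda$, where the cross term $-p-(s+1)b$ must be absorbed correctly. Once it is in the displayed form with nonnegative coefficients, monotonicity is evident. The sign conditions $\lambda,\mu\ge 0$ are exactly what allows the minimum of $\lambda D_\lambda+\mu D_\mu$ to be read off at $(1,1)$.
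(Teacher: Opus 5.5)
Your proof is correct and follows the paper's argument: condition~(i) of Proposition~\ref{prop:admissible_sum} comes from Lemma~\ref{lem:Gamma_clique_sum} (equivalently Proposition~\ref{prop:nsc(i)}), and condition~(ii) from the same expansion $g(a+1,b)-g(a,b)=\lambda(3a^2+4ab+2b^2+3a+2b+1)+\mu(4a+3b+2)+\nu\ge 15\lambda+9\mu+\nu>0$. Your observation that both brackets have nonnegative coefficients, and so are minimized at $a=b=1$, spells out a step the paper leaves implicit.
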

\begin{proof}
Condition \eqref{cond1:admissible_sum} holds by Proposition \ref{prop:nsc(i)}. To verify condition \eqref{cond2:admissible_sum}, 
we consider the difference:
\begin{align*}
g(a+1,b) - g(a,b) = \lambda(3a^2 + 2b^2 + 4ab + 3a + 2b + 1) + \mu(4a + 3b + 2) + \nu.
\end{align*}
Given \eqref{equ:3jiika_keisu}, the minimum of this expression over $a, b \geq 1$ equals $15\lambda + 9\mu + \nu$, 
which is strictly positive. Thus, $\Gamma$ is admissible for every finite set $X$.
\end{proof}
\begin{theorem}
Let $g(a,b)$ be a symmetric polynomial of degree at most two. A necessary and sufficient condition for a sum-type objective 
function $\Gamma$ to be admissible for every finite set $X$ is that $g$ is of the form
\begin{align}\label{equ:g=2jiika}
g(a,b) = \lambda(2(a+b)^2 - ab) + \mu(a+b)
\end{align}
for $\lambda \geq 0$ and $9\lambda + \mu > 0$.
\end{theorem}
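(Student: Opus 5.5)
We reduce everything to Proposition~\ref{prop:admissible_sum}, so admissibility means conditions~\eqref{cond1:admissible_sum} and~\eqref{cond2:admissible_sum}. The first step pins down the form of $g$. A symmetric polynomial of degree at most two lies in the degree-three family \eqref{equ:3.3} with $\lambda_1=\lambda_2=0$. By Proposition~\ref{prop:nsc(i)}, condition~\eqref{cond1:admissible_sum} holds if and only if $g$ has the form \eqref{equ:g=3jiika}. The cubic part of \eqref{equ:g=3jiika} must vanish for $g$ to have degree at most two, which forces the cubic coefficient to be $0$. After renaming the coefficients, condition~\eqref{cond1:admissible_sum} is therefore equivalent to $g$ having the form \eqref{equ:g=2jiika}, that is, $g(a,b)=\lambda(2(a+b)^2-ab)+\mu(a+b)$.

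The second step analyzes condition~\eqref{cond2:admissible_sum} for this $g$. Specializing the difference formula in the proof of Proposition~\ref{theo:3jiika_sum} to the quadratic case gives
\begin{align*}
D(a,b):=g(a+1,b)-g(a,b)=\lambda(4a+3b+2)+\mu .
\end{align*}
Condition~\eqref{cond2:admissible_sum} requires $D(a,b)>0$ for all $a,b\in\natu$.

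Sufficiency is immediate. If $\lambda\ge 0$, then $D$ is nondecreasing in $a$ and $b$, so its minimum over $\natu\times\natu$ is $D(1,1)=9\lambda+\mu$, which is positive by assumption.

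For necessity, $D(1,1)>0$ gives $9\lambda+\mu>0$ directly. If instead $\lambda<0$, then $D(a,1)=\lambda(4a+5)+\mu\to-\infty$ as $a\to\infty$, so $D$ is eventually negative and condition~\eqref{cond2:admissible_sum} fails. Hence $\lambda\ge 0$.

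The only delicate point is the reduction in the first step. We must make sure that Proposition~\ref{prop:nsc(i)} is applied within the correct class: a degree-at-most-two $g$ is a special case of a degree-at-most-three one, and it satisfies~\eqref{cond1:admissible_sum} exactly when it is of the form \eqref{equ:g=3jiika} with vanishing cubic coefficient. One can also check this directly. Substituting the general quadratic $\mu_1(a+b)^2+\mu_2 ab+\nu(a+b)$ into \eqref{eq:3241} forces $\mu_1+2\mu_2=0$. Conversely, Lemma~\ref{lem:Gamma_clique_sum} with the cubic coefficient set to $0$ gives sufficiency. Everything else is elementary inequality checking.
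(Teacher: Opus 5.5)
\textbf{A gap shared with the paper.} Your argument follows the paper's proof step for step: reduce to Proposition~\ref{prop:admissible_sum}, fix the shape of $g$ via Proposition~\ref{prop:nsc(i)}, then study $D(a,b)=\lambda(4a+3b+2)+\mu$. The inequality part is correct. The gap is in the first step, and the paper's proof has the same one. A general symmetric polynomial of degree at most two is $\kappa+\nu(a+b)+\mu_1(a+b)^2+\mu_2ab$, but both \eqref{equ:3.3} and your ``general quadratic'' omit the constant $\kappa$. So two of your claims fail in general: that a degree-$\le 2$ polynomial lies in the family \eqref{equ:3.3} with $\lambda_1=\lambda_2=0$, and that \eqref{eq:3241} forces $g$ into the form \eqref{equ:g=2jiika}.

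The constant is invisible to admissibility. Every unordered pair of distinct leaves is split at exactly one internal node, namely $\lca_T(x,y)$. Hence $\sum_{v\in V^\circ}H_T(v_+,v_-)=\sum_{\{x,y\}}M(x,y)$ for every cluster tree $T$. Replacing $g$ by $g+\kappa$ therefore adds the same quantity $\kappa\sum_{\{x,y\}}M(x,y)$ to $\Gamma(T)$ for all $T$, for every $M$. Consistently, $\kappa$ cancels in \eqref{eq:3241} and \eqref{eq:2231} (giving $5\kappa=5\kappa$ and $4\kappa=4\kappa$), and it cancels in $D(a,b)$.

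For example, $g(a,b)=a+b+1$ satisfies both conditions of Proposition~\ref{prop:admissible_sum} and has the same minimizers as Dasgupta's objective for every $M$, so it is admissible. Yet it is not of the form \eqref{equ:g=2jiika}. The necessity direction as stated is therefore false, and no argument can prove it.

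\textbf{The repair.} State the characterization as $g(a,b)=\lambda(2(a+b)^2-ab)+\mu(a+b)+\kappa$ with $\kappa\in\real$ arbitrary, $\lambda\ge 0$ and $9\lambda+\mu>0$. Alternatively, restrict the class to polynomials with zero constant term. With that change your proof goes through:
\begin{enumerate}[{\rm (1)}]
\item The $|X|=5$ equations still yield exactly $\mu_1+2\mu_2=0$ and leave $\kappa$ free.
\item Condition~\eqref{cond1:admissible_sum} holds: the $\kappa$-term by the pair-counting identity above, and the rest by Lemma~\ref{lem:Gamma_clique_sum}.
\item Your analysis of $D(a,b)$ is unchanged, because $D$ does not depend on $\kappa$.
\end{enumerate}
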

\begin{proof}
We first prove sufficiency. The condition \eqref{cond1:admissible_sum} in Proposition \ref{prop:admissible_sum} holds 
directly from Proposition \ref{prop:nsc(i)}. For all $(a, b) \in \natu \times \natu$, we observe that 
\begin{align*}
g(a+1, b) - g(a, b) &= \lambda(4a + 3b + 2) + \mu \\
%&\geq \lambda(4(1) + 3(1) + 2) + \mu\\
& \geq  9\lambda + \mu.
\end{align*}
Given the assumption $9\lambda + \mu > 0$, the condition \eqref{cond2:admissible_sum} is satisfied. 
Thus, $\Gamma$ is admissible by Proposition \ref{prop:admissible_sum}.
\par 
Next, we prove necessity. Assume that $\Gamma$ is admissible for every finite set $X$. By Proposition \ref{prop:admissible_sum}, conditions \eqref{cond1:admissible_sum} 
and \eqref{cond2:admissible_sum} must hold. Since $g$ is a symmetric polynomial of degree at most two, Proposition \ref{prop:nsc(i)} implies that $g$ 
must be in the form of \eqref{equ:g=2jiika} for some $\lambda, \mu \in \real$.
From condition \eqref{cond2:admissible_sum}, we must have
\begin{align}\label{equ:cond2_sum_nec}
g(a+1, b) - g(a, b) = \lambda(4a + 3b + 2) + \mu > 0
\end{align}
for all $a, b \in \natu$. We first show that $\lambda \geq 0$. Suppose, for the sake of contradiction, that $\lambda < 0$. By fixing $b$ and 
taking a sufficiently large $a$, the value of $\lambda(4a + 3b + 2) + \mu$ becomes negative, which contradicts \eqref{equ:cond2_sum_nec}. 
Hence, we must have $\lambda \geq 0$. Furthermore, setting $a = b = 1$ in \eqref{equ:cond2_sum_nec} yields $9\lambda + \mu > 0$, which completes the proof.
\end{proof}
%%%%%%%%%%%%%%%%%%%%%%%%%%%%%%%%%%%%%%%%%%%%%%%%%<<<<<<<
%%%%%%%%%%  Section 2.2  %%%%%%%%%%%%%%%%%%%%%%%%%%%%%%%
%%%%%%%%%%%%%%%%%%%%%%%%%%%%%%%%%%%%%%%%%%%%%%%%%
\subsection{Approximation Algorithm}\label{sec:Approximation_sum}
Having characterized admissible sum-type objective functions in the previous subsection,
we now turn to the algorithmic problem of minimizing such objectives. Namely, 
we consider the following combinatorial optimization problem:
\begin{align}\label{eq:Problem:minGamma}
\begin{array}{ll}
\text{Minimize} & \Gamma(T)\\
\text{subject to} & T \text{ is a cluster tree on } X,
\end{array}
\end{align}
where the objective function $\Gamma$ is a sum-type objective function~\eqref{Gamma:sum} with scaling function $g$ defined by 
\begin{align}\label{equ:g_approximation}
g(a, b) = \lambda((a+b)^3 - (a+b)ab) + \mu(2(a+b)^2 - ab)+ \nu(a+b).
\end{align}
We assume that the coefficients $\lambda, \mu, \nu \in \mathbb{R}$ satisfy
% $\lambda \geq 0, \mu \geq 0,$ and $\lambda + 2\mu + \nu > 0$, 
\begin{align}\label{equ:lmn_appr}
    \lambda \geq 0, \mu \geq 0, \lambda + 2\mu + \nu > 0,
\end{align}
which is a stronger condition than that of Proposition~\ref{theo:3jiika_sum}, and hence ensures 
the admissibility of $\Gamma$.
\par 
To solve Problem~\eqref{eq:Problem:minGamma}, we employ the {\it Recursive Sparsest Cut (RSC) algorithm}. 
Central to this approach is the notion of a {\it  sparsest cut}. For a bipartition $\{Y, X \setminus Y\}$ of $X$, its {\it density} is defined as:
\begin{align*}
d(Y, X \setminus Y) = \frac{\sum_{x \in Y, y \in X \setminus Y} M(x, y)}{|Y|  |X \setminus Y|}.
\end{align*}
Since finding a cut with the minimum density is NP-hard, we employ a subroutine that finds a {\it $\phi$-sparsest cut}---a cut 
whose density is at most $\phi$ times the optimal density (e.g., $\phi = \order(\sqrt{\log n})$ using the ARV algorithm~\cite{ARV}). 
The RSC algorithm (Algorithm~\ref{alg:RSC}) constructs a cluster tree by recursively applying this subroutine.
%%%%%%%%%%%%%%%%%%%%%%%%%%%%%%%%%%%%%%%%%%%%%%%%%%%%%%%%%%%%%%%%%%%%%%%%%%%
\IncMargin{0.5em}
\begin{algorithm}
\caption{Recursive Sparsest Cut Algorithm} \label{alg:RSC}
\KwIn{Similarity matrix $M$ on $X$.}
\If{$|X|=1$}{
\Return the single-leaf tree on $X$\;
}
Find a $\phi$-sparsest cut $\{Y, X \setminus Y\}$ of $M$\;
Recursively obtain cluster trees $T_Y$ and $T_{X \setminus Y}$ for the principal submatrices $M_Y$ and $M_{X \setminus Y}$\;
\Return a tree with a root having $T_Y$ and $T_{X \setminus Y}$ as its children\;
\end{algorithm}
\DecMargin{0.5em}
%%%%%%%%%%%%%%%%%%%%%%%%%%%%%%%%%%%%%%%%%%%%%%%%%%%%%%%%%%%%%%%%%%%%%%%%%%%%
\par 
The analysis follows the charging argument established by Charikar and Chatziafraitis~\cite{CC}. However, the core of our analysis is to 
verify that 
the polynomial $g$ defined in \eqref{equ:g_approximation} satisfies the growth conditions required to obtain an $\order(\phi)$ approximation 
guarantee. 
Before analyzing its performance, we establish the following property regarding the growth of $g$.
\begin{lemma} \label{lem:1}
Consider the function $g(k, n-k)$ for $1 \leq k \leq n-1$, where $g$ is defined in \eqref{equ:g_approximation}. 
This function is minimized when $k \in \{\lfloor n/2 \rfloor, \lceil n/2 \rceil\}$ and maximized when $k \in \{1, n-1\}$. 
\end{lemma}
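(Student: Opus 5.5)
With $a+b=n$ fixed, $g(k,n-k)$ depends on $k$ only through the product $p=k(n-k)$. Substituting $a+b=n$ and $ab=p$ into \eqref{equ:g_approximation} gives
\begin{align*}
g(k,n-k) = \lambda(n^3 - np) + \mu(2n^2 - p) + \nu n = \bigl(\lambda n^3 + 2\mu n^2 + \nu n\bigr) - (\lambda n + \mu)\,p .
\end{align*}
This is an affine function of $p$ with slope $-(\lambda n+\mu)$. Under \eqref{equ:lmn_appr} we have $\lambda\ge 0$ and $\mu\ge 0$, so the slope is nonpositive and $g(k,n-k)$ is nonincreasing in $p$. Only the signs of $\lambda$ and $\mu$ are needed here; $\nu$ only enters the constant term.

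The remaining step is the elementary behaviour of $p(k)=k(n-k)$ on the integers $1\le k\le n-1$. It is a concave quadratic symmetric about $n/2$, so over the integers it is maximized at $k\in\{\lfloor n/2\rfloor,\lceil n/2\rceil\}$. It increases for $k\le n/2$ and decreases afterwards, so it is minimized at the endpoints $k\in\{1,n-1\}$, where $p=n-1$. Since $g$ is nonincreasing in $p$, $g(k,n-k)$ is minimized at the balanced split $k\in\{\lfloor n/2\rfloor,\lceil n/2\rceil\}$ and maximized at the most unbalanced split $k\in\{1,n-1\}$. This is what the lemma asserts.

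There is one degenerate case to address. If $\lambda=\mu=0$, then $g(k,n-k)=\nu n$ is constant in $k$. Every $k$ is then simultaneously a minimizer and a maximizer, and the statement still holds in the non-strict sense, where "minimized/maximized at" means those values attain the extremum. I will state this reading explicitly, noting that strict inequalities hold whenever $\lambda n+\mu>0$. There is no real obstacle: the only point needing care is to reduce everything to the single variable $p=ab$ and to check the sign of $\lambda n+\mu$.
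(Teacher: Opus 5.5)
Your proof is correct and is essentially the paper's argument: the paper writes $g(k,n-k)=(\lambda n+\mu)(k-n/2)^2+\frac34\lambda n^3+\frac74\mu n^2+\nu n$, which is your affine-in-$p$ expression after substituting $p=k(n-k)=n^2/4-(k-n/2)^2$. Like you, it treats $\lambda n+\mu=0$ as the trivial constant case, and your explicit remark that $\lambda n+\mu\ge 0$ follows from $\lambda,\mu\ge 0$ in \eqref{equ:lmn_appr} spells out a point the paper leaves implicit when it calls the parabola convex.
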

\begin{proof}
By substituting $a=k$ and $b=n-k$ into \eqref{equ:g_approximation}, we can rewrite $g(k, n-k)$ as a quadratic function of $k$:
\begin{align}\label{eq:g(k,n-k)}
g(k, n-k) &= (\lambda n + \mu)\left(k - \frac{n}{2}\right)^2 + \frac{3}{4}\lambda n^3 + \frac{7}{4}\mu n^2 + \nu n.
\end{align}
If $\lambda n+\mu=0$, then $g(k,n-k)=\nu n$, and the assertion of the lemma holds trivially. Otherwise, the function is a convex parabola 
centered at $k = n/2$. Thus, the function attains its minimum at the integers closest to $n/2$ and its maximum at the boundaries of the domain $k \in [1, n-1]$.
\end{proof}
%%%%%%%%%%%%%%%%%%%%%%%%%%%%%%%%%%%%%%%%%%%%%%%%%%%%%%%%%%%%%%%%%%%<<<<<<<
\par 
Next, we introduce a function $f(t)$ that will be used to charge the cost incurred by the algorithm to the optimal value. 
Although the scaling function $g$ was initially defined on $\natu\times\natu$, 
we extend its domain to $\zahl_+\times\zahl_+$ in a natural way, 
where $\zahl_+$ is the set of nonnegative integers, 
using the same polynomial form \eqref{equ:g_approximation}. 
Note that $g(0,0)=0$ under this definition. We then define $f\colon 
\zahl_+\to\real$ by
\begin{align*}
f(t) = g\left(\left\lfloor\frac{t+1}{2}\right\rfloor, \left\lceil\frac{t+1}{2}\right\rceil\right) - g\left(\left\lfloor\frac{t}{2}\right\rfloor, \left\lceil\frac{t}{2}\right\rceil\right).
\end{align*}
By this definition, for any integer $r\geq 1$, the telescoping sum of $f(t)$ satisfies:
\begin{align}\label{equ:f_sum_property}
\sum_{t=0}^{r-1} f(t) = g\left(\left\lfloor\frac{r}{2}\right\rfloor, \left\lceil\frac{r}{2}\right\rceil\right) - g(0, 0) = g\left(\left\lfloor\frac{r}{2}\right\rfloor, \left\lceil\frac{r}{2}\right\rceil\right).
\end{align}
Using this function $f(t)$, we relate the optimal value $\OPT$ to the similarities between 
clusters. Let $T^*=(V,E)$ be an optimal cluster tree, 
and let
\begin{align*}
\mathcal{C}^* = \{\, L(T_v^*) \mid v \in V \,\}
\end{align*}
denote the family of clusters induced by the nodes of $T^*$. For each $t\in \{0,\dots,n-1\}$, let $\OPT(t)$ 
be the set of maximal (with respect to inclusion) clusters in $\mathcal{C}^*$ whose sizes are at most $t$, and let $E_{\OPT}(t)$ be the set of pairs $\{x,y\}$ such that $x\in A$ and $y\in B$ for distinct $A,B\in\OPT(t)$.

For a set $F$ of unordered pairs in $X$, let $M(F)=\sum_{\{x,y\}\in F}M(x,y)$. For disjoint subsets $A,B\subseteq X$, we write $M(A,B)=\sum_{x\in A,\,y\in B}M(x,y)$. When $F$ is a set of pairs and $A\subseteq X$, we write $F\cap A$ for the set of pairs in $F$ whose two endpoints both belong to $A$.
\begin{lemma}\label{claim:1}
\begin{align}\label{eq:boundOPT}
\sum_{t=0}^{n-1} M(E_{\OPT}(t)) \cdot f(t) \leq \OPT.
\end{align}
\end{lemma}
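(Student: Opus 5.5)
We prove the bound by exchanging the order of summation and charging each pair separately: we rewrite the left-hand side of \eqref{eq:boundOPT} as a sum over pairs $\{x,y\}$, and show that each pair's total weight is at most what it pays in $\OPT$.

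Fix distinct $x,y\in X$. Let $v=\lca_{T^*}(x,y)$ and $s=|L(T^*_v)|$. Then $\{x,y\}\in E_{\OPT}(t)$ holds exactly when $x$ and $y$ lie in different maximal clusters of size at most $t$. Every cluster containing both $x$ and $y$ contains $L(T^*_v)$, so it has size at least $s$. Conversely, if $t\le s-1$, the maximal clusters of size $\le t$ containing $x$ and $y$ are proper subclusters of $L(T^*_v)$ and hence lie in different children of $v$. Two cases need a word:
\begin{itemize}
\item Every singleton has size $1\le t$ whenever $t\ge1$, so $x$ and $y$ are each covered by some element of $\OPT(t)$.
\item For $t=0$ the set $\OPT(0)$ is empty, so no pair belongs to $E_{\OPT}(0)$.
\end{itemize}
Hence $\{x,y\}\in E_{\OPT}(t)$ if and only if $1\le t\le s-1$. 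Interchanging sums gives
\begin{align*}
\sum_{t=0}^{n-1} M(E_{\OPT}(t))\, f(t) = \sum_{\{x,y\}} M(x,y) \sum_{t=1}^{s_{xy}-1} f(t) \le \sum_{\{x,y\}} M(x,y)\sum_{t=0}^{s_{xy}-1} f(t),
\end{align*}
where the last inequality needs $f(0)\ge 0$. Indeed $f(0)=g(0,1)-g(0,0)=g(0,1)=\lambda+2\mu+\nu>0$ by \eqref{equ:lmn_appr}. By the telescoping identity \eqref{equ:f_sum_property}, the inner sum equals $g(\lfloor s/2\rfloor,\lceil s/2\rceil)$.

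It remains to compare with the pair's contribution to $\OPT$. Write $a=|L(T^*_{v_+})|$ and $b=|L(T^*_{v_-})|$, so $a+b=s$. Expanding $H_{T^*}(v_+,v_-)$, the pair $\{x,y\}$ contributes exactly $M(x,y)\,g(a,b)$ to $\OPT=\Gamma(T^*)$. By Lemma~\ref{lem:1}, $g(k,s-k)$ is minimized at $k\in\{\lfloor s/2\rfloor,\lceil s/2\rceil\}$, so $g(\lfloor s/2\rfloor,\lceil s/2\rceil)\le g(a,b)$. Since $M\ge 0$, summing over all pairs yields \eqref{eq:boundOPT}.

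The main obstacle is to pin down the range of $t$ for which $\{x,y\}\in E_{\OPT}(t)$, and in particular to handle the boundary term at $t=0$. If I instead had the range $0\le t\le s-1$ exactly, the bound would be an equality before applying Lemma~\ref{lem:1}. A smaller point is that Lemma~\ref{lem:1} is stated for $1\le k\le n-1$, whereas here it is applied with $n=s$ and $k=a\in[1,s-1]$. The parabola form \eqref{eq:g(k,n-k)} has nonnegative leading coefficient $\lambda s+\mu$, so this application is justified.
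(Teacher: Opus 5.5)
Your proof is correct and follows the same per-pair charging argument as the paper: you swap the order of summation, telescope via \eqref{equ:f_sum_property}, and then use Lemma~\ref{lem:1} to get $g(\lfloor s/2\rfloor,\lceil s/2\rceil)\le g(a,b)$. Your treatment of $t=0$ is more careful than the paper's. The paper asserts that each pair contributes over $t=0,\dots,s-1$, even though $E_{\OPT}(0)=\emptyset$, and your observation that $f(0)=\lambda+2\mu+\nu>0$ closes that small gap.
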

\begin{proof}
Consider a pair $\{x, y\}$, and let $u=\lca_{T^*}(x,y)$.
Let
\begin{align*}
C_{xy}=L(T_u^*), \qquad C_x=L(T_{u_+}^*), \qquad C_y=L(T_{u_-}^*)
\end{align*}
be the clusters induced by $u$ and its two children. 
The pair's contribution to $\OPT$ is $M(x, y) \cdot g(|C_x|, |C_y|)$. In the left-hand side of~\eqref{eq:boundOPT}, its contribution is 
$\sum_{t=0}^{|C_{xy}|-1} M(x, y) f(t) = M(x, y) \cdot g(\lfloor |C_{xy}|/2 \rfloor, \lceil |C_{xy}|/2 \rceil)$. 
By Lemma~\ref{lem:1}, 
the function $g(a,b)$ is minimized
when $|a-b|$ is minimized among pairs with $a+b=|C_{xy}|$.
Since $|C_x|+|C_y|=|C_{xy}|$, we obtain
\[
M(x,y)g\!\left(\left\lfloor\frac{|C_{xy}|}{2}\right\rfloor,
         \left\lceil\frac{|C_{xy}|}{2}\right\rceil\right)
\le M(x,y)g(|C_x|,|C_y|).
\]
Summing this inequality over all pairs $\{x,y\}$ yields the desired result.
\end{proof}
%%%%%%%%%%%%%%%%%%%%%%%%%%%%%%%%%%%%%%%%%%%%%%%%%%%%%%%%%%%%%%%%%%%%%<<<<<<<
\par 
The final step is to show that the objective function value of the RSC algorithm's output, $\Gamma_{\RSC}$, is bounded by the same sum scaled by $\order(\phi)$. 
The following lemma links the algorithm's recursive splits to the optimal structure. 
\begin{lemma}\label{lm:4.4}
\begin{align*}
\Gamma_{\RSC} \leq \order(\phi) \sum_{A: |A| \geq 2} \frac{s(A)}{|A|} \sum_{t=\lfloor |A|/4 \rfloor}^{\lfloor |A|/2 \rfloor - 1} M(E_{\OPT}(t) \cap A) \cdot f(t).
\end{align*}
\end{lemma}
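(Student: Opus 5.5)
The plan is to follow the Charikar--Chatziafraitis charging argument node by node in the RSC tree. Here $A$ ranges over the clusters split by the algorithm. When $A$ is split into $\{B, A\setminus B\}$, let $s(A)=\min\{|B|,|A\setminus B|\}$; I take this to be the meaning of $s(A)$. The cost of the split is $M(B,A\setminus B)\,g(|B|,|A\setminus B|)$, and $\Gamma_{\RSC}$ is the sum of these costs. By Lemma~\ref{lem:1}, $g(|B|,|A|-|B|)\le g(1,|A|-1)$. So it suffices to prove, for each $A$ with $|A|\ge 2$,
\[
M(B,A\setminus B)\, g(1,|A|-1) \le \order(\phi)\,\frac{s(A)}{|A|}\sum_{t=\lfloor |A|/4\rfloor}^{\lfloor |A|/2\rfloor-1} M(E_{\OPT}(t)\cap A)\, f(t).
\]

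\textbf{Step 1 (comparison cut).} Fix $t$ with $\lfloor |A|/4\rfloor\le t\le \lfloor |A|/2\rfloor-1$. The clusters of $\OPT(t)$ partition $X$ into parts of size at most $t<|A|/2$. Intersecting them with $A$ partitions $A$ into parts of size less than $|A|/2$. Add the parts greedily to one side until it first reaches size at least $|A|/4$; each part has size less than $|A|/2$, so this yields a bipartition $\{P,A\setminus P\}$ of $A$ with both sides of size $\Omega(|A|)$. Only pairs in $E_{\OPT}(t)\cap A$ cross this cut, so its density is at most $c\,M(E_{\OPT}(t)\cap A)/|A|^2$ for an absolute constant $c$.

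\textbf{Step 2 (use the subroutine).} The algorithm's cut is a $\phi$-sparsest cut of $M_A$. Since $|B|\,|A\setminus B|\le s(A)|A|$, Step 1 gives
\[
M(B,A\setminus B)\le c\,\phi\,\frac{s(A)}{|A|}\,M(E_{\OPT}(t)\cap A)
\]
for every $t$ in the range. Multiply this by $f(t)\ge 0$ and sum over the range. Nonnegativity of $f$ in the range follows from Lemma~\ref{lem:1}, because balanced values of $g$ increase with the total size under~\eqref{equ:lmn_appr}; I would verify this directly from~\eqref{eq:g(k,n-k)}. The telescoping identity~\eqref{equ:f_sum_property} then reduces the claim to the growth inequality
\[
g(1,n-1)\le C\left[g\left(\lfloor n'/2\rfloor,\lceil n'/2\rceil\right)-g\left(\lfloor n''/2\rfloor,\lceil n''/2\rceil\right)\right],
\]
with $n=|A|$, $n'=\lfloor n/2\rfloor$ and $n''=\lfloor n/4\rfloor$, for an absolute constant $C$.

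\textbf{Step 3 (the main obstacle: the polynomial growth estimate).} By~\eqref{eq:g(k,n-k)}, $g(1,n-1)=\lambda(n^3-n^2+n)+\mu(2n^2-n+1)+\nu n$, up to the exact lower-order terms. The balanced value at size $m$ is $\tfrac34\lambda m^3+\tfrac74\mu m^2+\nu m$ plus a bounded parity correction. The bracket is therefore approximately $\tfrac34\lambda(1/8-1/64)n^3+\tfrac74\mu(1/4-1/16)n^2+\nu n/4$. The $\lambda$ and $\mu$ terms match the leading orders of $g(1,n-1)$ up to constants. The difficulty is that $\nu$ may be negative. I would split into cases.
\begin{itemize}
\item If $\nu\ge 0$, compare the three terms separately.
\item If $\nu<0$, then~\eqref{equ:lmn_appr} gives $\lambda+2\mu>|\nu|$. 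For $n$ larger than a fixed constant, the $\lambda n^3$ and $\mu n^2$ terms absorb $|\nu| n$ on both sides.
\end{itemize}
For the finitely many small $n$, I would check positivity of the bracket directly, so that the constant only gets worse. The case $|A|\le 3$ needs separate treatment because the $t$-range is then empty. I expect to handle it by a convention absorbing these $\order(1)$-size splits into the constant, or by extending the range slightly, and I would check which convention the subsequent argument uses.
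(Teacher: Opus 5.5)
Your proposal is correct and is essentially the paper's argument: both bound the split density by $\phi\,\SC(A)$, bound $\SC(A)$ through a cut of $A$ induced by $\OPT(t)$, and use Lemma~\ref{lem:1} to reduce everything to boundedness of $g(1,|A|-1)/f'(|A|)$. The paper instead uses a single averaging bound at $t=\lfloor r/2\rfloor$ together with monotonicity of $M(E_{\OPT}(t)\cap A)$ in $t$, where you use a greedy balanced cut for each $t$, and it computes $f'$ explicitly in residue classes modulo $4$, where you split on the sign of $\nu$. Two small corrections: for $|A|\in\{2,3\}$ the $t$-range is $\{0\}$, not empty, so no separate case is needed once $\OPT(0)$ is read as the singleton partition (as the paper implicitly does), giving $f'(|A|)=f(0)=\lambda+2\mu+\nu>0$; and $C$ cannot be absolute, since $g(1,2)/f(0)=(21\lambda+16\mu+3\nu)/(\lambda+2\mu+\nu)$ is unbounded as $\lambda+2\mu+\nu\to 0^+$, so it must depend on $(\lambda,\mu,\nu)$, as it does in Lemma~\ref{lem:c_g}.
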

\begin{proof}
Consider a cluster $A$ generated by the RSC algorithm such that $|A|=r\geq 2$. 
The algorithm splits $A$ into a bipartition $P(A)=\{B_1,B_2\}$ using a $\phi$-approximation 
of the sparsest cut. Let $s(A)=\min\{|B_1|,|B_2|\}$. The contribution of this split to the objective 
function $\Gamma_{\RSC}$ is 
\begin{align*}
M(P(A)) \cdot g(s(A), r - s(A)).
\end{align*}
To bound this cost, we first relate the density of the cut $P(A)$ to the sparsest cut value 
$\SC(A)$. By the definition of the $\phi$-sparsest cut, we have
\begin{align} \label{equ:phi_bound_density}
\frac{M(P(A))}{s(A)(r - s(A))} \leq \phi \SC(A).
\end{align}
We bound $\SC(A)$ using the optimal structure. 
\par
Let 
\begin{equation*}
\{A_1, \dots, A_k\} = \{C \cap A \mid C \in \OPT(\lfloor r/2 \rfloor), C \cap A \neq \emptyset\}.
\end{equation*}
Since $\OPT(\lfloor r/2\rfloor)$ is a partition of $X$ consisting of maximal clusters in $\mathcal{C}^*$ of size at 
most $\lfloor r/2\rfloor$, the set $\{A_1, \dots, A_k\}$ forms a partition of $A$. 
Furthermore, for each $i=1,\dots,k$, we have $|A_i|\leq \lfloor r/2\rfloor\leq r/2$. 
Thus, there exists $0<\gamma_i\leq 1/2$ such that $|A_i|=\gamma_i r$, where 
$\sum_{i=1}^k\gamma_i=1$. 
By the definition of $\SC(A)$, we have 
\begin{align*}
\SC(A) \leq \min_{i=1}^k \frac{M(A_i, A \setminus A_i)}{|A_i| |A \setminus A_i|} \leq \frac{\sum_{i=1}^k M(A_i, A \setminus A_i)}{\sum_{i=1}^k \gamma_i (1 - \gamma_i) r^2}.
\end{align*}
Using the property that $\sum\gamma_i(1-\gamma_i)\geq 1/2$ for $0\leq \gamma_i\leq 1/2$, 
and noting that $\sum M(A_i,A\setminus A_i)=2M(E_{\OPT}(\lfloor r/2\rfloor)\cap A)$, 
we obtain
\begin{align}\label{equ:SC_bound}
\SC(A) \leq \frac{4}{r^2} M(E_{\OPT}(\lfloor r/2 \rfloor) \cap A).
\end{align}
Combining \eqref{equ:phi_bound_density} and \eqref{equ:SC_bound}, the split cost is bounded as:
\begin{align} \label{equ:contribution_A}
\lefteqn{M(P(A)) \cdot g(s(A), r - s(A))}\notag\\
&\leq 4\phi \frac{s(A)(r - s(A))}{r^2} M(E_{\OPT}(\lfloor r/2 \rfloor) \cap A) \cdot g(s(A), r - s(A)).
\end{align}
\par 
Now, define
\[
f'(r)=\sum_{t=\lfloor r/4\rfloor}^{\lfloor r/2\rfloor-1} f(t).
\]
By telescoping, this can be rewritten as
\[
f'(r)
=
g\!\left(
\left\lfloor \frac{\lfloor r/2\rfloor}{2}\right\rfloor,
\left\lceil \frac{\lfloor r/2\rfloor}{2}\right\rceil
\right)
-
g\!\left(
\left\lfloor \frac{\lfloor r/4\rfloor}{2}\right\rfloor,
\left\lceil \frac{\lfloor r/4\rfloor}{2}\right\rceil
\right).
\]
Since $M(E_{\OPT}(t)\cap A)$ is monotonically decreasing in $t$ 
(as clusters in $\OPT(t)$ only coarsen as $t$ increases), we have 
\begin{align} \label{equ:charge_A}
\sum_{t=\lfloor r/4 \rfloor}^{\lfloor r/2 \rfloor - 1} M(E_{\OPT}(t) \cap A) \cdot f(t) \geq M(E_{\OPT}(\lfloor r/2 \rfloor) \cap A) \cdot f'(r).
\end{align}
Substituting \eqref{equ:charge_A} into \eqref{equ:contribution_A}, the split cost of $A$ is
\begin{align*}
\lefteqn{M(P(A)) \cdot g(s(A), r - s(A))}\\
& \leq 4\phi \frac{s(A)}{r} \cdot \frac{g(s(A), r - s(A))}{f'(r)} \sum_{t=\lfloor r/4 \rfloor}^{\lfloor r/2 \rfloor - 1} M(E_{\OPT}(t) \cap A) \cdot f(t).
\end{align*}
As established in Lemma~\ref{lem:c_g} in the Appendix, the ratio 
$\frac{g(s(A),r-s(A))}{f'(r)}$ is bounded from above for our polynomial $g$. Summing over all clusters $A$ generated by the algorithm, we conclude
\begin{align*}
\Gamma_{\RSC} &= \sum_{A} M(P(A)) \cdot g(s(A), |A|-s(A))\\
&\leq \order(\phi) \sum_{A} \frac{s(A)}{|A|} \sum_{t=\lfloor |A|/4 \rfloor}^{\lfloor |A|/2 \rfloor - 1} M(E_{\OPT}(t) \cap A) \cdot f(t).
\end{align*}
\end{proof}
%%%%%%%%%%%%%%%%%%%%%%%%%%%%%%%%%%%%%%%%%%%%%%%%%%%%%%%%%%%%%%%%<<<<<<<
\begin{lemma}\label{claim:2}
\begin{align}\label{eq:lemma2.9}
\sum_{A:|A| \geq 2} \frac{s(A)}{|A|} \sum_{t=\lfloor|A|/4\rfloor}^{\lfloor|A|/2\rfloor-1} M(E_{\OPT}(t) \cap A) \cdot f(t) \leq 2 \sum_{t=0}^{n-1} M(E_{\OPT}(t)) \cdot f(t).
\end{align}
\end{lemma}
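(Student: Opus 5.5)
We prove the inequality by exchanging the order of summation and charging each pair to the clusters of the RSC tree that contain it. Writing $M(E_{\OPT}(t)\cap A)=\sum_{\{x,y\}\in E_{\OPT}(t),\,x,y\in A} M(x,y)$, the left-hand side of \eqref{eq:lemma2.9} becomes
\[
\sum_{t=0}^{n-1} f(t)\sum_{\{x,y\}\in E_{\OPT}(t)} M(x,y)\sum_{A\ni x,y,\ \lfloor|A|/4\rfloor\le t\le\lfloor|A|/2\rfloor-1}\frac{s(A)}{|A|}.
\]
Here $A$ ranges over the clusters of the RSC tree of size at least two. Before exchanging, we check that $f(t)\ge 0$ for every $t$: since $g(k,k)$, $g(k,k+1)$ with $k=\lfloor t/2\rfloor$ are values of a nondecreasing sequence when \eqref{equ:lmn_appr} holds (each increment is $g(a+1,b)-g(a,b)$ with $a,b\ge 0$, which is positive by the computation in Proposition~\ref{theo:3jiika_sum}; the boundary cases $a$ or $b=0$ come from directly evaluating the polynomial, e.g.\ $g(1,0)=\lambda+2\mu+\nu>0$). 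Nonnegativity of $f$ together with $M\ge0$ lets us freely exchange sums and bound term by term. It therefore suffices to show that for every fixed $t$ and every fixed pair $\{x,y\}$,
\[
\sum_{A\ni x,y,\ \lfloor|A|/4\rfloor\le t\le\lfloor|A|/2\rfloor-1}\frac{s(A)}{|A|}\le 2 .
\]

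The clusters $A$ of the RSC tree containing both $x$ and $y$ form a chain $A_1\supsetneq A_2\supsetneq\cdots\supsetneq A_m$ along the path from the root down to $\lca_{\RSC}(x,y)$. Each $A_{j+1}$ is a part of the split $P(A_j)$, so $|A_{j+1}|\le|A_j|-s(A_j)$, i.e.
\[
\frac{s(A_j)}{|A_j|}\le 1-\frac{|A_{j+1}|}{|A_j|}\le\ln\frac{|A_j|}{|A_{j+1}|}.
\]
This inequality fails only for the last cluster $A_m$, where $x$ and $y$ are separated, and there we simply use $s(A_m)/|A_m|\le 1/2$.

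The constraint $\lfloor|A|/4\rfloor\le t\le\lfloor|A|/2\rfloor-1$ forces $2(t+1)\le|A|\le 4t+3$. Let $A_p,\dots,A_q$ be the consecutive members of the chain satisfying it; all their sizes lie in $[2t+2,\,4t+3]$. For $j<q$ the term $s(A_j)/|A_j|$ is at most $\ln(|A_j|/|A_{j+1}|)$, and these telescope to at most $\ln(|A_p|/|A_q|)\le\ln\frac{4t+3}{2t+2}<\ln 2$. The final term $s(A_q)/|A_q|$ is at most $1/2$. The total is therefore below $\ln 2+1/2<2$.

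The main obstacle is the edge bookkeeping: telescoping must stop at $A_q$, so the last term has to be handled separately via $s/|A|\le1/2$, and the sign of $f(t)$ must be verified using \eqref{equ:lmn_appr}, including the boundary values involving $g(\cdot,0)$. Once these are in place, summing the per-pair bound over pairs and over $t$ gives \eqref{eq:lemma2.9}.
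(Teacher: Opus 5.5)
Your proof is correct and follows the paper's argument: you exchange the order of summation, restrict to the chain of RSC clusters containing a fixed pair whose sizes lie in $[2t+2,\,4t+3]$, and telescope using $s(A_j)\le |A_j|-|A_{j+1}|$. The differences are minor. You telescope via $1-1/x\le\ln x$ and bound the last cluster separately by $1/2$, which gives the sharper constant $\ln 2+\tfrac12$ in place of the paper's bound $|A_1|/(2t+2)<2$. You also explicitly verify $f(t)\ge 0$, which the paper uses only implicitly when it bounds the sum term by term.
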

\begin{proof}
By swapping the order of summation, we can rewrite the left-hand side of~\eqref{eq:lemma2.9} as 
\begin{align}\label{equ:4.5_new}
\sum_{t=0}^{\lfloor n/2\rfloor-1} f(t) \sum_{A: 2t+2 \leq |A| \leq 4t+3} \frac{s(A)}{|A|} M(E_{\OPT}(t) \cap A).
\end{align}
Here, the inner sum is taken over all clusters $A$ in the algorithm's output such that $t$ falls 
within the 
range $[\lfloor |A|/4\rfloor,\lfloor |A|/2\rfloor -1]$, which is equivalent to $2t+2\leq |A|\leq 4t+3$. 
\par 
For a fixed $t$, we evaluate the inner sum by considering the contribution of each pair 
$\{x,y\}\in E_{\OPT}(t)$. A pair $\{x,y\}$ contributes to the sum only if the cluster $A$ 
contains both $x$ and $y$. Let $A_1\supset A_2\supset \cdots \supset A_k$ be the 
sequence of clusters in the algorithm's output that contain $\{x,y\}$ and 
satisfy the size constraint $2t+2\leq |A_i|\leq 4t+3$. Then,  
\begin{align} \label{equ:Mkeisu_new}
\sum_{A: 2t+2 \leq |A| \leq 4t+3} \frac{s(A)}{|A|} M(E_{\OPT}(t) \cap A) = \sum_{\{x, y\} \in E_{\OPT}(t)} M(x, y) \sum_{i=1}^k \frac{s(A_i)}{|A_i|}.
\end{align}
Recall that $s(A_i)=\min\{|A_{i+1}|,|A_i|-|A_{i+1}|\}$, which implies $s(A_i)\leq |A_i|-|A_{i+1}|$. 
Thus, the inner sum over $i$ is a telescoping-like sum:
\begin{align*}
\sum_{i=1}^k \frac{s(A_i)}{|A_i|} \leq \sum_{i=1}^k \frac{|A_i| - |A_{i+1}|}{|A_i|} \leq \frac{\sum_{i=1}^k (|A_i| - |A_{i+1}|)}{\min_i |A_i|} \leq \frac{|A_1| - |A_{k+1}|}{2t+2},
\end{align*}
where $A_{k+1}$ is a child cluster of $A_k$ that either does not contain $\{x,y\}$
or has size smaller than $2t+2$.
Given the constraint $|A_1|\leq 4t+3$, we have 
\begin{align*}
\frac{|A_1|}{2t+2} \leq \frac{4t+3}{2t+2} < 2.
\end{align*}
Substituting this back into \eqref{equ:Mkeisu_new}, we find that the inner sum is bounded by
$2M(E_{\OPT}(t))$. Finally, applying this to \eqref{equ:4.5_new} yields the desired bound.
\end{proof}
%%%%%%%%%%%%%%%%%%%%%%%%%%%%%%%%%%%%%%%%%%%%%%%%%%%%%%%%%%%%%%%%<<<<<<<
\begin{theorem}
Algorithm \ref{alg:RSC} achieves an $\order(\phi)$-approximation for Problem \eqref{eq:Problem:minGamma}.
\end{theorem}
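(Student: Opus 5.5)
The proof chains the three preceding lemmas. Let $\Gamma_{\RSC}$ denote the objective value of the tree returned by Algorithm~\ref{alg:RSC}, and let $\OPT$ be the optimal value of Problem~\eqref{eq:Problem:minGamma}.

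First, Lemma~\ref{lm:4.4} bounds $\Gamma_{\RSC}$ by $\order(\phi)$ times the weighted sum
\[
S=\sum_{A:|A|\ge 2}\frac{s(A)}{|A|}\sum_{t=\lfloor |A|/4\rfloor}^{\lfloor |A|/2\rfloor-1} M(E_{\OPT}(t)\cap A)\, f(t).
\]
Here $A$ ranges over the clusters produced by the algorithm. The hidden constant comes from the factor $4$ in \eqref{equ:contribution_A} and from the uniform upper bound on $g(s(A),r-s(A))/f'(r)$ supplied by Lemma~\ref{lem:c_g}.

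Next, Lemma~\ref{claim:2} gives $S\le 2\sum_{t=0}^{n-1}M(E_{\OPT}(t))f(t)$. Lemma~\ref{claim:1} then bounds this last sum by $\OPT$. Chaining the three inequalities yields $\Gamma_{\RSC}\le \order(\phi)\cdot 2\,\OPT=\order(\phi)\,\OPT$, which is the claim.

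Two points need care before this chain is valid.

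\emph{Nonnegativity.} Some steps implicitly require $f(t)\ge 0$. Lemma~\ref{claim:2} compares sums weighted by $f(t)$. The inequality \eqref{equ:charge_A} in Lemma~\ref{lm:4.4} uses monotonicity of $M(E_{\OPT}(t)\cap A)$ in $t$, and that inequality only goes the right way if every $f(t)$ is nonnegative. Similarly, $f'(r)$ must be strictly positive, since the proof divides by it. I would verify this directly from \eqref{eq:g(k,n-k)}. Writing $h(m)=g(\lfloor m/2\rfloor,\lceil m/2\rceil)$, we have
\[
h(m)=\tfrac34\lambda m^3+\tfrac74\mu m^2+\nu m+\tfrac14(\lambda m+\mu)\epsilon_m,
\]
where $\epsilon_m=0$ for even $m$ and $\epsilon_m=1$ for odd $m$. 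Under \eqref{equ:lmn_appr}, the difference $f(t)=h(t+1)-h(t)$ should be nonnegative for all $t\ge 0$. The binding case is small $t$: $f(0)=h(1)=g(0,1)=\lambda+2\mu+\nu>0$, and the conditions $\lambda,\mu\ge 0$ control the growth for larger $t$. This is exactly why the paper imposes the stronger condition \eqref{equ:lmn_appr}.

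\emph{Degenerate clusters.} Clusters $A$ with $|A|\in\{2,3\}$ have $\lfloor r/4\rfloor=0$ and $\lfloor r/2\rfloor-1=0$, so the charging range is the single value $t=0$ and $f'(r)=h(1)-h(0)=g(0,1)>0$. I would check that Lemma~\ref{lm:4.4} still applies to them, since $M(E_{\OPT}(0)\cap A)$ counts all pairs in $A$ and so covers the cut.

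\textbf{Main obstacle.} The main obstacle is the uniform bound on $g(s,r-s)/f'(r)$ from Lemma~\ref{lem:c_g}. By Lemma~\ref{lem:1}, the numerator is at most $g(1,r-1)=\Theta(\lambda r^3+\mu r^2+\nu r)$. The denominator is $h(\lfloor r/2\rfloor)-h(\lfloor r/4\rfloor)$, which equals
\[
\tfrac34\lambda\left[(r/2)^3-(r/4)^3\right]+\tfrac74\mu\left[(r/2)^2-(r/4)^2\right]+\nu\, r/4,
\]
up to lower-order terms, and each bracket has the same degree as the matching term in the numerator. The subtle case is $\nu<0$, where the linear terms could spoil positivity. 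For large $r$ the higher-degree terms dominate as long as $\lambda$ or $\mu$ is positive. If $\lambda=\mu=0$, then $\nu>0$ and $g$ is Dasgupta's objective up to scaling. Small $r$ is a finite check in which $f'(r)>0$ by the nonnegativity argument above. Taking this lemma as given, the theorem is simply the chain of inequalities above.
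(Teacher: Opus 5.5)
Your proposal is correct and is the paper's own proof, which simply chains Lemmas~\ref{lm:4.4}, \ref{claim:2} and \ref{claim:1}. Your side checks also hold up; the nonnegativity is cleanest seen by noting that each $f(t)$ equals $g(a+1,b)-g(a,b)$ with $b\in\{a,a+1\}$, hence $f(t)\ge\lambda+2\mu+\nu>0$ by the computation in Lemma~\ref{lem:f'}, and for clusters of size $2$ or $3$ your claim that $E_{\OPT}(0)$ contains all pairs requires reading $\OPT(0)$ as the singleton partition (taken literally it is empty), a convention the paper also uses tacitly when it starts the sum at $t=0$ in the proof of Lemma~\ref{claim:1}.
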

\begin{proof}
Combining the results of Lemmas \ref{lm:4.4}, \ref{claim:2}, and \ref{claim:1}, we obtain
\begin{align*}
\Gamma_{\RSC} &\leq \order(\phi) \sum_A \frac{s(A)}{|A|} \sum_{t=\lfloor|A|/4\rfloor}^{\lfloor|A|/2\rfloor-1} M(E_{\OPT}(t) \cap A) \cdot f(t)\\
&\leq \order(\phi) \sum_{t=0}^{n-1} M(E_{\OPT}(t)) \cdot f(t)\\
&\leq \order(\phi) \cdot \OPT.
\end{align*}
\end{proof}
%%%%%%%%%%%%%%%%%%%%%%%%%%%%%%%%%%%  Section 3  %%%%%%%%%%%%%%%%%%%%%%%%%%%%%%%%%%%%%%%%%%%%%%%%%
\section{Characterizations of Admissible Max-Type Objective Functions}\label{chapter:max}
In this section, we consider max-type objective functions, as defined in~\eqref{Gamma:max}, 
and characterize admissibility within this class. 
Unlike the sum-type case, we do not address approximation guarantees here.
%%%%%%%%%%%%%%%%%%%%%%%%%%%%%%%%%%%%%%%%%%%%%%%%%%%%%%%%%%%%%%%%%%%%
% Subsection 3.1: Characterization of Max-Type Admissible Functions
%%%%%%%%%%%%%%%%%%%%%%%%%%%%%%%%%%%%%%%%%%%%%%%%%%%%%%%%%%%%%%%%%%%%
\subsection{Characterization of Admissible Max-Type Objective Functions} \label{sec:Chara_max}
The following theorem gives a general characterization of admissible max-type objective functions, with no specific functional restrictions on $g$. 
\begin{theorem}\label{theo:admissible_max}
Let $\Gamma$ be a max-type objective function as defined in~\eqref{Gamma:max}, with scaling function $g\colon \natu\times\natu\to\real$.
Extend $g$ to $\zahl_+\times\zahl_+$ by setting
$g(0,t)=g(t,0)=0$ for all $t\in\zahl_+$.
Then $\Gamma$ is admissible for every finite set $X$ if and only if it satisfies the following conditions:
\begin{enumerate}[{\rm(i)}]
\item For every finite set $X$, if $M(x,y)=1$ for all distinct $x,y\in X$, then $\Gamma(T)$ is equal for all cluster trees $T$ on $X$. \label{cond1:admissible_max}
\item For each $a,b,c,d\in\zahl_+$ such that \label{cond2:admissible_max}
\begin{align*}
a+b>0,\quad c+d>0,\quad a+c>0,\quad b+d>0,\quad a+d>0,\quad b+c>0,
\end{align*}
we have
\begin{align*}
g(a+c,b+d)>g(a,b)+g(c,d).
\end{align*}
\end{enumerate}
\end{theorem}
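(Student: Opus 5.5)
The plan is to reduce everything to $0/1$ similarity matrices by a layer-cake (threshold) decomposition, exploiting that $\max$ commutes with thresholding. For $\theta>0$ let $M_\theta(x,y)=1$ if $M(x,y)\ge\theta$ and $0$ otherwise. Then for any disjoint $A,B$,
\[
\max_{x\in A,y\in B}M(x,y)=\int_0^\infty \max_{x\in A,y\in B}M_\theta(x,y)\,d\theta .
\]
Hence for every cluster tree $T'$ we have $\Gamma_M(T')=\int_0^\infty\Gamma_{M_\theta}(T')\,d\theta$, and the integrand is piecewise constant with finitely many breakpoints. If $T$ is a generating tree of $M$ with weights $h$, the internal nodes with $h\ge\theta$ form a downward-closed set, because $h$ is nondecreasing from parent to child. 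Consequently $M_\theta$ is a disjoint union of cliques on clusters $C_1,\dots,C_k$ of $T$, where singletons are allowed.

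\emph{Sufficiency.} Let $\varphi(s)$ denote the common value from (i) of $\Gamma$ on $s$ points with uniform similarity, with $\varphi(1)=0$. The core claim is the following. For $M$ a disjoint union of cliques on a partition $\{C_i\}$ of $X$, every tree $T'$ satisfies $\Gamma(T')\ge\sum_i\varphi(|C_i|)$, with equality iff every $C_i$ is a cluster of $T'$.

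I would prove this by induction on $|X|$. Let the root of $T'$ split $X$ into $A,B$, and put $a_i=|C_i\cap A|$, $b_i=|C_i\cap B|$. The root term is $g(|A|,|B|)$ if some $C_i$ meets both sides and $0$ otherwise. The induction hypothesis bounds the subtrees below by $\sum_i(\varphi(a_i)+\varphi(b_i))$. Applying (i) to a tree on $C_i$ whose root split is $(a_i,b_i)$ gives $\varphi(a_i+b_i)=g(a_i,b_i)+\varphi(a_i)+\varphi(b_i)$, with the convention $g(0,\cdot)=0$.

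So it suffices to show
\[
\sum_i g(a_i,b_i)\le g\Bigl(\sum a_i,\sum b_i\Bigr)\cdot[\text{some }C_i\text{ is split}],
\]
and that this is strict unless exactly one $C_i$ is split and $A\cup B=C_i$. I would get this by repeatedly applying (ii) to merge the pairs $(a_i,b_i)$ one at a time, including the pairs with a zero coordinate. The positivity hypotheses on $a,b,c,d$ are exactly what is needed for each merge step to be legitimate. I expect this bookkeeping to be the main obstacle: handling the degenerate pairs (one side empty) and extracting strictness precisely in the right cases.

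Integrating over $\theta$ then gives $\Gamma(T')\ge\Gamma(T)$. Equality forces every threshold clique family $\{C_i\}$, for almost every $\theta$ (hence all relevant $\theta$), to consist of clusters of $T'$. From this one defines $h'(v)=\max_{x\in L(T'_{v_+}),\,y\in L(T'_{v_-})}M(x,y)$ and checks that $h'$ is monotone and that $h'(\lca_{T'}(x,y))=M(x,y)$. Hence $T'$ is generating, and conversely every generating tree attains the minimum.

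\emph{Necessity.} Condition (i) is forced because under uniform similarity every tree is generating, so all trees must be minimizers. For (ii), take disjoint blocks $A,B,C,D$ of sizes $a,b,c,d$, and let $M=1$ inside $A\cup B$ and inside $C\cup D$, and $0$ otherwise. The positivity hypotheses guarantee that all the trees below are well defined. A generating tree splits the root into $\{A\cup B,\,C\cup D\}$ and then $A|B$ and $C|D$, with cost $g(a,b)+g(c,d)+\sum\varphi$. The tree with root split $\{A\cup C,\,B\cup D\}$, then $A|C$ and $B|D$, is not generating, and its cost is $g(a+c,b+d)+0+0+\sum\varphi$. Admissibility forces the strict inequality $g(a+c,b+d)>g(a,b)+g(c,d)$.
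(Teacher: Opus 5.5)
Your proposal is correct, and its sufficiency half takes a genuinely different route from the paper's. The paper inducts on $|X|$ with a four-block exchange at the root. It intersects the root bipartitions of an arbitrary $T$ and a generating $T^*$ into $A,B,C,D$, and builds intermediate trees $\tilde T$ and $T'$ from generating subtrees on the blocks. It then chains $\Gamma(T)\ge\Gamma(T')\ge\Gamma(\tilde T)\ge\Gamma(T^*)$, using the four-term identity of Lemma~\ref{lm:con1} to cancel the baseline terms $m\,g(\cdot,\cdot)$ and (ii) to control the excess $m'-m$. The converse (optimal implies generating) comes from the equality case of that chain.

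You instead linearize $\max$ by thresholding, so that only $0/1$ matrices that are disjoint unions of cliques on clusters of $T^*$ matter. You then need only the recursion $\varphi(a+b)=g(a,b)+\varphi(a)+\varphi(b)$ plus an iterated form of (ii). This buys three things: a per-threshold lower bound (hence a closed form for the optimum), the explicit certificate $h'=K_{T'}$ in the equality case, and independence from the sign of $g$.

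That last point matters. The paper's step $(m'-m)g(a+c,b+d)-(m_{AB}-m)g(a,b)-(m_{CD}-m)g(c,d)\ge(m'-m)\bigl(g(a+c,b+d)-g(a,b)-g(c,d)\bigr)$ needs $g(a,b),g(c,d)\ge 0$, which it attributes to (ii). Yet $g(a,b)=ab-100$ satisfies both (i) and (ii) with $g(1,1)<0$. The step can be repaired using the coordinatewise monotonicity that (ii) does imply, but your argument never meets the issue. The cost of your route is the threshold bookkeeping, which you handle correctly via piecewise constancy in $\theta$. Your necessity argument is essentially the paper's, with uniform blocks $A\cup B$ and $C\cup D$ in place of the values $2,1,0$.

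The merge step you flagged does need a specific order, because (ii) does not license every merge:
\begin{itemize}
\item combining $(a',0)$ with $(0,b')$ would assert $g(a',b')>0$, which can fail;
\item combining $(a',0)$ with $(a'',0)$ violates $b+d>0$, though harmlessly, since both sides are $0$.
\end{itemize}
Start from the pair of a split clique, which has both coordinates positive, and absorb the remaining pairs one at a time. The running total then keeps both coordinates positive, so all six hypotheses of (ii) hold at each step and each absorption is strict.

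Also, when no clique is split, the inequality is $0\le 0$. So your strictness statement should read ``strict whenever some $C_i$ is split and there are at least two cliques.'' In the unsplit case, the equality characterization comes from the induction hypothesis applied to the two subtrees.
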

Before proving Theorem~\ref{theo:admissible_max}, we present the following lemma, which establishes a functional identity for $g$ that is a necessary consequence of condition~\eqref{cond1:admissible_max}.
\begin{lemma}\label{lm:con1}
Assume that condition~\eqref{cond1:admissible_max} holds for every finite set $X$.
Extend $g$ to $\zahl_+\times\zahl_+$ by setting
\[
g(0,t)=g(t,0)=0 \qquad (t\in\zahl_+).
\]
Then, for all $a,b,c,d\in\zahl_+$, we have
\begin{align}\label{con1:equ6_nonnegative}
g(a+b,c+d)+g(a,b)+g(c,d)
=
g(a+c,b+d)+g(a,c)+g(b,d).
\end{align}
\end{lemma}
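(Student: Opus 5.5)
We compare two cluster trees on the same ground set under uniform similarity. In that setting every $K_T(v_+,v_-)=1$, so $\Gamma(T)=\sum_{v\in V^\circ} g(|L(T_{v_+})|,|L(T_{v_-})|)$.

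First consider the generic case $a,b,c,d\ge 1$. Take $X$ with $|X|=a+b+c+d$ and partition it into blocks $A,B,C,D$ of sizes $a,b,c,d$. Fix arbitrary cluster trees $T_A,T_B,T_C,T_D$ on the four blocks. Build two trees from them.
\begin{itemize}
\item $T$: the root has children $T_{A\cup B}$ and $T_{C\cup D}$, where $T_{A\cup B}$ joins $T_A,T_B$ under a new node and $T_{C\cup D}$ joins $T_C,T_D$.
\item $T'$: the same construction with the pairing $\{A,C\},\{B,D\}$.
\end{itemize}
By condition~\eqref{cond1:admissible_max}, $\Gamma(T)=\Gamma(T')$. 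The internal nodes inside $T_A,\dots,T_D$ appear in both trees with identical child sizes, so their contributions cancel. Three nodes of each tree remain. In $T$ they contribute
\[
g(a+b,c+d)+g(a,b)+g(c,d),
\]
and in $T'$
\[
g(a+c,b+d)+g(a,c)+g(b,d).
\]
This is exactly~\eqref{con1:equ6_nonnegative}.

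Next handle zeros with the convention $g(0,t)=g(t,0)=0$.
\begin{itemize}
\item If all of $a,b,c,d$ are zero, both sides are $0$.
\item If exactly one is zero, say $d=0$, identity~\eqref{con1:equ6_nonnegative} becomes $g(a+b,c)+g(a,b)=g(a+c,b)+g(a,c)$. This is the same two-tree comparison with three blocks: $((A,B),C)$ versus $((A,C),B)$. The phantom node with an empty child has value $0$, and the node that would merge with an empty set disappears. The other positions of the zero follow by relabeling and the symmetry of $g$.
\item If exactly two are zero, both sides reduce to the same terms. For example, $c=d=0$ gives $g(a+b,0)+g(a,b)+0=g(a,b)+0+0$, which holds since $g(a+b,0)=0$. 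Similarly, $b=c=0$ gives $g(a,d)+0+0=g(a,d)+0+0$. The remaining patterns are checked directly.
\item If three are zero, every term vanishes.
\end{itemize}

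The only real point of care is this bookkeeping of degenerate cases. Identifying a tree with an empty child with the tree in which that node is suppressed is exactly what the zero extension encodes. Beyond that, the argument is the cancellation of the common subtrees, which is what lets arbitrary block sizes be handled.
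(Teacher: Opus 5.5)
Your proof is correct and follows essentially the same route as the paper, which also compares the pairings $\{A\cup B,C\cup D\}$ and $\{A\cup C,B\cup D\}$ of four blocks under uniform similarity. The difference is only in packaging: the paper uses the common value $\Gamma^*(k)$ (with $\Gamma^*(0)=\Gamma^*(1)=0$) and the recursion $\Gamma^*(p+q)=g(p,q)+\Gamma^*(p)+\Gamma^*(q)$, which holds for all $p,q\in\zahl_+$ under the zero convention, so the degenerate cases you check by hand are handled automatically.
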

\begin{proof}
For each positive integer $k$, let $\Gamma^*(k)$ denote the common objective value of cluster trees on a $k$-element set under the uniform similarity matrix. We set $\Gamma^*(0)=\Gamma^*(1)=0$. 
For $p,q\in\zahl_+$, the convention $g(0,t)=g(t,0)=0$ gives
\begin{align*}
\Gamma^*(p+q)=g(p,q)+\Gamma^*(p)+\Gamma^*(q).
\end{align*}
Indeed, when $p,q>0$, this follows by considering a cluster tree whose root separates subsets of sizes $p$ and $q$; when one of $p,q$ is zero, the identity is immediate from the convention.
\par
Applying this identity in two ways to $a+b+c+d$, we obtain
\begin{align*}
\Gamma^*(a+b+c+d)
&= g(a+b,c+d)+\Gamma^*(a+b)+\Gamma^*(c+d)\\
&= g(a+b,c+d)+g(a,b)+g(c,d)\\
&\ +\Gamma^*(a)+\Gamma^*(b)+\Gamma^*(c)+\Gamma^*(d),
\end{align*}
and similarly,
\begin{align*}
\Gamma^*(a+b+c+d)
&= g(a+c,b+d)+\Gamma^*(a+c)+\Gamma^*(b+d)\\
&= g(a+c,b+d)+g(a,c)+g(b,d)\\
&\ +\Gamma^*(a)+\Gamma^*(b)+\Gamma^*(c)+\Gamma^*(d).
\end{align*}
Comparing the two expressions yields~\eqref{con1:equ6_nonnegative}.
\end{proof}

%%%%%%%%%%%%%%%%%%%%%%%%%%%%%%%%%%%%%%%%%%%%%%%%%%%%%%%%%%<<<<<<
\begin{figure}[t]
\centering
\begin{minipage}[b]{0.4\hsize}
\centering
\includegraphics[scale=0.7]{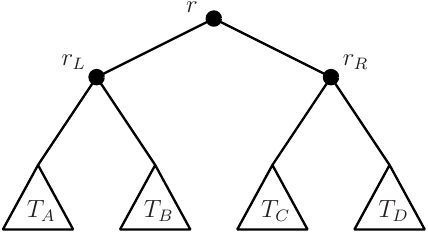}
\subcaption{}\label{con2:T}
\end{minipage}
\hspace{8mm}
\begin{minipage}[b]{0.4\hsize}
\centering
\includegraphics[scale=0.7]{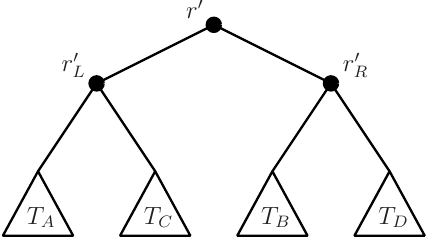}
\subcaption{}\label{con2:T'}
\end{minipage}
\caption{(a) Tree $T$; (b) Tree $T'$.}\label{con2:TT'}
\end{figure}
%%%%%%%%%%%%%%%%%%%%%%%%%%%%%%%%%%%%%
\begin{proof}[Proof of Theorem \ref{theo:admissible_max}]
[The ``only if'' part:] Suppose $\Gamma$ is admissible for every finite set $X$. We show conditions (i) and (ii) hold.
\par
First, to show~\eqref{cond1:admissible_max}, assume $M(x, y) = 1$ for all distinct $x, y \in X$. Any cluster tree $T$ on $X$ is a generating tree of $M$, as a weight function assigning $1$ to all internal nodes satisfies the required conditions. By admissibility, all cluster trees are optimal, thus their objective values are equal.
\par
Next, to show~\eqref{cond2:admissible_max}, let $a, b, c, d \in \zahl_+$ be such that $a+b>0$, $c+d>0$, $a+c>0$, $b+d>0$, $a+d>0$, and $b+c>0$. 
First, we assume $a,b,c,d>0$.
Let $A, B, C, D$ be disjoint subsets with sizes $a, b, c, d$, and $X = A \cup B \cup C \cup D$. Define a similarity matrix $M$ on $X$ as
\begin{align*}
M(x,y) = \begin{cases}
                2 & \text{if } \{x,y\} \subseteq A, B, C, \text{ or } D, \\
                1 & \text{if } (x,y) \in (A \times B) \cup (C \times D), \\
                0 & \text{if } (x,y) \in (A \cup B) \times (C \cup D).
            \end{cases}
\end{align*}
Let $T_A, T_B, T_C, T_D$ be cluster trees on $A, B, C, D$, respectively. Let $T$ be a tree where the root $r$ has children $r_L$ and $r_R$, $r_L$ is the parent of the roots of $T_A$ and $T_B$, and $r_R$ is that of $T_C$ and $T_D$ (Figure~\ref{con2:T}). Similarly, let $T'$ be a tree where the root $r'$ has children $r'_L$ and $r'_R$, $r'_L$ is the parent of the roots of $T_A$ and $T_C$, and $r'_R$ is that of $T_B$ and $T_D$ (Figure~\ref{con2:T'}).
\par
$T$ is a generating tree of $M$ because the assignment of weight $2$ to internal nodes of $T_A, T_B, T_C, T_D$, weight $1$ to $r_L, r_R$, and weight $0$ to $r$ satisfies the definition. However, 
$T'$ is not a generating tree of $M$ since there is no valid weight satisfying the conditions of a generating tree. 
By admissibility, $\Gamma(T') > \Gamma(T)$. Direct calculation gives
\begin{align*}
\Gamma(T) &= g(a, b) + g(c, d) + \Gamma(T_A) + \Gamma(T_B) + \Gamma(T_C) + \Gamma(T_D), \\
\Gamma(T') &= g(a+c, b+d) + \Gamma(T_A) + \Gamma(T_B) + \Gamma(T_C) + \Gamma(T_D).
\end{align*}
Thus, $g(a+c, b+d) > g(a, b) + g(c, d)$. 

If exactly one of $a,b,c,d$ is zero, the same construction is used after omitting the corresponding empty set and the corresponding empty subtree. The convention $g(0,t)=g(t,0)=0$ ensures that the same calculation gives the desired inequality.
\par
%%%%%%%%%%%%%%%%%%%%%%%%%%%%%%%%%%%%%%%%%%%%%%%%%%%%%%%%%%%%%%%%%%%%
% Proof of Theorem 3.1: The "if" part
%%%%%%%%%%%%%%%%%%%%%%%%%%%%%%%%%%%%%%%%%%%%%%%%%%%%%%%%%%%%%%%%%%%%
\noindent [The ``if'' part:] 
We prove the ``if'' part of the theorem by induction on $n=|X|$. 
\par
When $n= 2$, there is only one cluster tree. For any similarity matrix $M$ on $X$, this tree is both the unique generating tree and the unique optimal tree for $\Gamma$; thus, $\Gamma$ is admissible. 
Let $n \geq 3$ and assume the ``if'' part of the theorem holds for objective functions defined for cluster trees on any proper subset of $X$. 
Suppose that $\Gamma$ satisfies conditions~\eqref{cond1:admissible_max} and~\eqref{cond2:admissible_max}. Consider a similarity matrix $M$ on $X$ for which a generating tree exists. Let $T^*$ be a generating tree of $M$ (Figure~\ref{ifpart:T*}), and let $T$ be any cluster tree on $X$ (Figure~\ref{ifpart:T}). 
Let $r^*$ and $r$ denote the root of $T^*$ and $T$, respectively.  
Let $r_L^*$ and $r_R^*$ denote the left and right children of $r^*$, and let $T_L^*$ and $T_R^*$ denote the subtrees 
of $T^*$ rooted at $r_L^*$ and $r_R^*$, respectively.  
    Similarly, let $r_L$ and $r_R$ denote the left and right children of $r$, and let $T_L$ and $T_R$ denote the subtrees of $T$ 
rooted at $r_L$ and $r_R$, respectively.   
    Let $X_L^*, X_R^*, X_L, X_R$ denote the leaf sets of $T_L^*, T_R^*, T_L, T_R$, respectively.  
    Define sets $A, B, C, D$ as $A = X_L^* \cap X_L$, $B = X_L^* \cap X_R$, $C = X_R^* \cap X_L$, $D = X_R^* \cap X_R$, and let $a, b, c, d$ be their respective sizes. 
%%%%
We give the argument for the case $a,b,c,d>0$.
If the root bipartition of $T$ coincides with that of $T^*$ up to exchanging the two sides, then the conclusion follows directly from the induction hypothesis applied to the two child subtrees. Thus, we may assume that the two root bipartitions are distinct. Under this assumption, at most one of $A,B,C,D$ is empty. The case in which exactly one of them is empty is obtained from the following argument by omitting the corresponding empty subtree and all quantities involving the empty set; the convention $g(0,t)=g(t,0)=0$ and Lemma~\ref{lm:con1} ensure that the same algebraic identities remain valid.
%%%
%%%%%%%%%%%%%%%%%%%%%%%% figure %%%%%%%%%%%%%%%%%%%%%%%%%%%%%%%%%%
\begin{figure}[t]
\centering
\begin{minipage}[b]{0.2\textwidth}
\centering
\includegraphics[scale=0.7]{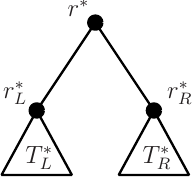}
\subcaption{}\label{ifpart:T*}
\end{minipage}
\hspace{15mm}
\begin{minipage}[b]{0.2\textwidth}
\centering
\includegraphics[scale=0.7]{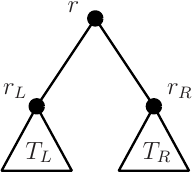}
\subcaption{}\label{ifpart:T}
\end{minipage}
\bigskip\\
\begin{minipage}[b]{0.4\textwidth}
\centering
\includegraphics[scale=0.7]{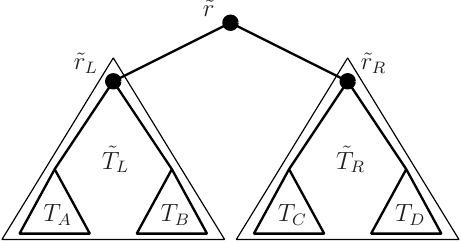}
\subcaption{}\label{ifpart:Ttilde}
\end{minipage}
\hspace{5mm}
\begin{minipage}[b]{0.4\textwidth}
\centering
\includegraphics[scale=0.7]{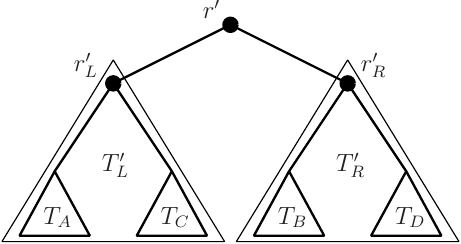}
\subcaption{}\label{ifpart:Tprime}
\end{minipage}
\caption{(a) Tree $T^*$ of $M$; (b) Tree $T$ on $X$; (c) Tree $\tilde{T}$; (d) Tree $T'$.}
\label{ifpart:tree}
\end{figure}
%%%%%%%%%%%%%%%%%%%%%%%% figure %%%%%%%%%%%%%%%%%%%%%%%%%%%%%%%%%%
%%%%%%%%%%%%%%%%%%%%%%%%%%%%%%%%%%%%%%%%%%%%%%%%%%%
\par
For $Y\subseteq X$, let us denote by $M_Y$ the principal submatrix of $M$ with index set $Y$. 
Let $T_A, T_B, T_C, T_D$ be generating trees of the principal submatrices $M_A, M_B, M_C, M_D$, respectively. 
(It follows from Lemma~\ref{lem:GeneratingTree_MyDefe} that any principal submatrix of a similarity matrix that has a generating tree 
also has a generating tree.) By the induction hypothesis, these are optimal trees for their respective principal submatrices. 
Construct a cluster tree $\tilde{T}$ as follows: let $\tilde{r}_L$ be the parent of the roots of $T_A$ and $T_B$, and $\tilde{r}_R$ be the parent of those of $T_C$ and $T_D$. Let $\tilde{r}$ be the root of $\tilde{T}$ with children $\tilde{r}_L$ and $\tilde{r}_R$ (Figure~\ref{ifpart:Ttilde}).
\par
Since $T_L^*$ and $T_R^*$ are generating trees of $M_{A \cup B}$ and $M_{C \cup D}$, respectively, they are respectively optimal for these matrices by the induction hypothesis. Because $L(T_L^*) = L(\tilde{T}_L)$ and $L(T_R^*) = L(\tilde{T}_R)$, it follows that $K_{T^*}(r^*_L, r^*_R) = K_{\tilde{T}}(\tilde{r}_L, \tilde{r}_R)$. Thus, we have
\begin{align}
    \Gamma(T^*) &= K_{T^*}(r^*_L, r^*_R) \cdot g(a+b, c+d) + \Gamma(T_L^*) + \Gamma(T_R^*) \notag \\
    &\leq K_{\tilde{T}}(\tilde{r}_L, \tilde{r}_R) \cdot g(a+b, c+d) + \Gamma(\tilde{T}_L) + \Gamma(\tilde{T}_R) = \Gamma(\tilde{T}). \label{GammaTtilde>=GammaT*}
\end{align}
Suppose that equality holds in~\eqref{GammaTtilde>=GammaT*}. Then $\tilde{T}_L$ and $\tilde{T}_R$ 
are respectively optimal trees for $M_{A\cup B}$ and $M_{C\cup D}$, and hence, by the induction hypothesis, 
generating trees of the respective submatrices, which in turn implies that $\tilde{T}$ is a generating tree of $M$.
\par
Now, construct tree $T'$ (Figure~\ref{ifpart:Tprime}), where the root $r'$ has children $r'_L$ and $r'_R$; $r'_L$ is the parent of the roots of $T_A$ and $T_C$, and $r'_R$ is that of $T_B$ and $T_D$. 
Let $m = \min \{M(x, y) \mid x, y \in X, x \neq y\}$. 
For disjoint subsets $Y,Z$ of $X$, let $m_{YZ} = \max \{M(y, z) \mid y \in Y, z \in Z\}$.  
Since $T^*$ is a generating tree of $M$, $m = m_{AC} = m_{AD} = m_{BC} = m_{BD}$. Let $m' = m_{A \cup C, B \cup D} = \max\{m_{AB}, m_{CD}\}$. Then, 
since $T'_L$ and $T'_R$ are respectively generating trees of $M_{A\cup C}$ and $M_{B\cup D}$, and hence, by the induction 
hypothesis optimal trees for these submatrices, we have 
\begin{align*}
    \Gamma(T) &= m'g(a+c, b+d) + \Gamma(T_L) + \Gamma(T_R) \\
    &\geq m'g(a+c, b+d) + \Gamma(T'_L) + \Gamma(T'_R) \\
    & = \Gamma(T').
\end{align*}
Combining these, we analyze the difference $\Gamma(T) - \Gamma(\tilde{T})$:
\begin{align}
\lefteqn{\Gamma(T) - \Gamma(\tilde{T})}\notag\\
 &\geq \Gamma(T') - \Gamma(\tilde{T}) \notag \\
    &= m'g(a+c, b+d) + m g(a, c) + m g(b, d) \notag \\
    &\quad - mg(a+b, c+d) - m_{AB}g(a, b) - m_{CD} g(c, d) \notag \\
    &= (m' - m)g(a+c, b+d) - (m_{AB} - m)g(a, b)- (m_{CD} - m)g(c, d)  \notag\\
    &\quad + m(g(a+c, b+d) + g(a, c) + g(b, d)) - m(g(a+b, c+d) + g(a, b) + g(c, d)) \notag \\
    &\geq (m' - m)(g(a+c, b+d) - g(a, b) - g(c, d)) \geq 0. \label{GammaT>=GammaTtilde}
\end{align}
The second inequality follows from the identity in Lemma~\ref{lm:con1}, which cancels 
the linear $m$-terms, together with the inequalities $m' \geq m_{AB}$ and $m' \geq m_{CD}$. The third inequality follows from Condition~\eqref{cond2:admissible_max}. 
Note that condition~\eqref{cond2:admissible_max} and the convention
$g(0,t)=g(t,0)=0$ imply $g(p,q)>0$ for all $p,q\in\natu$. 
\par
Since $T$ is an arbitrary cluster tree, \eqref{GammaTtilde>=GammaT*} and \eqref{GammaT>=GammaTtilde} imply $\Gamma(T) \geq \Gamma(T^*)$, meaning any generating tree $T^*$ is optimal.
\par
%%%%%%%%%%%%%%%%%%%%%%%%%%%%%%%%%%%%%%%%%%%%%%%%%%%%%%%%%%%%%%%%%%%%
% Final Segment of Theorem 3.1: Equality, Converse, and Weight Function
%%%%%%%%%%%%%%%%%%%%%%%%%%%%%%%%%%%%%%%%%%%%%%%%%%%%%%%%%%%%%%%%%%%%
\par
We now show the converse: that every optimal tree for $\Gamma$ is a generating tree of $M$. 
Suppose that $T$ is an optimal tree for $M$. Then, from~\eqref{GammaTtilde>=GammaT*} and~\eqref{GammaT>=GammaTtilde}, we must have the equalities $\Gamma(T) = \Gamma(\tilde{T}) = \Gamma(T^*)$. These equalities imply the following:
\begin{enumerate}[{\rm (a)}]
    \item $\tilde{T}$ is a generating tree of $M$,
    \item $T_L$ and $T_R$ are optimal trees for $M_{A\cup C}$ and $M_{B\cup D}$, respectively, and $m_{AB} = m_{CD} = m' = m$.
\end{enumerate}
\par
By (a), there exist constants $\gamma_{AB}$ and $\gamma_{CD}$ such that $M(x,y) = \gamma_{AB}$ for $(x,y) \in A \times B$ and $M(x,y) = \gamma_{CD}$ for $(x,y) \in C \times D$. However, from (b), we must have $m = \gamma_{AB} = \gamma_{CD}$, and hence, 
\begin{align} \label{eq:Ttildesroot}
    M(x,y) = m \quad \text{for all } (x,y) \in (A \times B) \cup (C \times D).
\end{align}
Since $T^*$ is also a generating tree of $M$, we have by definition
\begin{align} \label{eq:T*sroot}
    M(x,y) = m \quad \text{for all } (x,y) \in (A \cup B) \times (C \cup D).
\end{align}
Combining \eqref{eq:Ttildesroot} and \eqref{eq:T*sroot}, we obtain
\begin{align} \label{eq:Tsroot}
    M(x,y) = m \quad \text{for all } (x,y) \in (A \cup C) \times (B \cup D).
\end{align}
By the induction hypothesis and condition (b), $T_L$ and $T_R$ are 
generating trees of $M_{A \cup C}$ and $M_{B \cup D}$, respectively. Thus, there exist weight functions $h_L$ and $h_R$ defined on the internal nodes of $T_L$ and $T_R$ such that:
\begin{align}
   h_L(\lca_T(x,y)) &= M(x,y) \quad (x,y \in A \cup C), \label{eq:h_L} \\
   h_R(\lca_T(x,y)) &= M(x,y) \quad (x,y \in B \cup D). \label{eq:h_R}
\end{align}
Define a weight function $h \colon V^\circ \to \real_+$ for the tree $T$ by:
\begin{align}
h(v) = \begin{cases} 
h_L(v) & \text{if } v \in V^\circ(T_L), \\
h_R(v) & \text{if } v \in V^\circ(T_R), \\
m & \text{if } v = r.
\end{cases}
\end{align}
It follows from \eqref{eq:Tsroot}, \eqref{eq:h_L}, and \eqref{eq:h_R} that $h(\lca_T(x,y)) = M(x,y)$ for all distinct $x,y \in X$. This confirms that $T$ is a generating tree of $M$.
\end{proof}
%%%%%%%%%%%%%%%%%%%%%%%%%%%%%%%%%%%%%%%%%%%%%%%%%%%%%%%%%%%%%%%%%%%%  Subsection 3.2  %%%%%%%%%%%%%%%%%%%%%%%%%%%%%%%%%%%%%%%%%%%%%%%%%%%%%%%%%%%%%%%%%%%%
\subsection{Characterization of Admissible Max-Type Objective Functions with Quadratic Scaling Functions}
 \label{subsec:2jiika}
We give a characterization of admissible max-type objective functions of the form~\eqref{Gamma:max} 
in the case where $g$ is a symmetric polynomial of degree at most two.
\begin{lemma} \label{lem:Gamma_clique_max}
Let $\Gamma$ be a max-type objective function of the form~\eqref{Gamma:max}. Suppose that $g$ is defined by
\begin{align}
        g(a,b) &= \lambda ab\label{equ:g=ab}
\end{align}
for some real number $\lambda$. If $M(x,y) = 1$ for all distinct $x, y \in X$, then for all cluster trees $T$ on $X$, we have 
\begin{align}
   \Gamma(T) = \frac{\lambda}{2}(|X|^2 - |X|). \label{equ:Gamma_clique_max}
\end{align}
\end{lemma}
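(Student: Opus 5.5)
We prove the claim by induction on $n=|X|$, in the same way as Lemma~\ref{lem:Gamma_clique_sum}. The key simplification is that under the uniform similarity matrix $M\equiv 1$ we have $K_T(v_+,v_-)=1$ for every internal node $v$, because both child clusters are nonempty and every pair of distinct points has similarity $1$. Hence $\Gamma(T)=\sum_{v\in V^\circ}\lambda\,|L(T_{v_+})|\,|L(T_{v_-})|$.

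For the base case $n=1$ the tree has no internal nodes, so $\Gamma(T)=0=\frac{\lambda}{2}(1-1)$. For $n=2$ the unique tree gives $\Gamma(T)=g(1,1)=\lambda=\frac{\lambda}{2}(4-2)$.

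For the inductive step, let $n\ge 3$ and let $r$ be the root of $T$ with child subtrees of sizes $a$ and $b=n-a$. The objective decomposes as $\Gamma(T)=\lambda ab+\Gamma(T_{r_+})+\Gamma(T_{r_-})$. Applying the induction hypothesis to the two subtrees gives
\begin{align*}
\Gamma(T)=\lambda ab+\frac{\lambda}{2}(a^2-a)+\frac{\lambda}{2}(b^2-b)=\frac{\lambda}{2}\bigl((a+b)^2-(a+b)\bigr)=\frac{\lambda}{2}(n^2-n),
\end{align*}
which is \eqref{equ:Gamma_clique_max}.

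There is also a non-inductive way to see this. Each unordered pair $\{x,y\}$ contributes exactly once, at $\lca_T(x,y)$, since $|L(T_{v_+})|\,|L(T_{v_-})|$ counts the pairs separated at $v$. So $\Gamma(T)=\lambda\binom{n}{2}$. The only point needing care is the observation that $K_T\equiv 1$ at every internal node, which holds because both children are nonempty; beyond that there is no real obstacle.
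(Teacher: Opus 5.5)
Your proof is correct and follows the paper's argument: induction on $|X|$ with base case $|X|=2$, using the root decomposition $\Gamma(T)=\lambda ab+\Gamma(T_{r_+})+\Gamma(T_{r_-})$. Your explicit $n=1$ base case and your remark that $K_T\equiv 1$ spell out two points the paper uses without stating them (single-leaf subtrees, and the value of the maximum). Your pair-counting observation at the lowest common ancestor is also a valid one-line alternative.
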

\begin{proof}
    Let $M$ be a similarity matrix satisfying $M(x,y) = 1$ for all distinct $x, y \in X$, and let $T$ be any cluster tree on $X$. The proof proceeds by induction on $n=|X|$.
    \par
    For $n = 2$, there is only one cluster tree consisting of two leaves and one internal node. For this tree $T$, we have
\begin{align*}
        \Gamma(T) &= g(1,1) = \lambda = \frac{\lambda}{2}(2^2 - 2).
\end{align*}
Thus, the base case holds.
\par
    Assume that \eqref{equ:Gamma_clique_max} holds for $n \leq k-1$ with $k \geq 3$ and 
suppose that $n = k$. Let $r$ be the root of $T$, $a = |L(T_{r_+})|$ and $b = |L(T_{r_-})|$. Then, by the induction hypothesis, 
we have 
\begin{align*}
\Gamma(T) &= g(a,b) + \Gamma(T_{r_+}) + \Gamma(T_{r_-})\\
   &= \lambda ab + \frac{\lambda}{2}(a^2-a) + \frac{\lambda}{2}(b^2-b)\\
%   &= \frac{\lambda}{2}(2ab + a^2 - a + b^2 - b)\\
   &= \frac{\lambda}{2}((a + b)^2 - (a + b))\\
   &= \frac{\lambda}{2}(n^2 - n).
\end{align*}
Thus, the lemma holds for all $n$.
\end{proof}
\begin{theorem} \label{theo:2jiika_max}
Let $g(a,b)$ be a symmetric polynomial of degree at most two.
A necessary and sufficient condition for a max-type objective function $\Gamma$, as defined in~\eqref{Gamma:max}, to be admissible for every finite set $X$ is that
$g$ can be expressed in the form of~\eqref{equ:g=ab} for some real number $\lambda > 0$.
\end{theorem}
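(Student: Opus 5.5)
The plan is to reduce everything to Theorem~\ref{theo:admissible_max} and handle its two conditions separately: first determine which degree-two polynomials satisfy condition~\eqref{cond1:admissible_max}, then check condition~\eqref{cond2:admissible_max} on the survivors.

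\textbf{Sufficiency.} Let $g(a,b)=\lambda ab$ with $\lambda>0$.

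\begin{enumerate}[{\rm (1)}]
\item Condition~\eqref{cond1:admissible_max} is exactly Lemma~\ref{lem:Gamma_clique_max}.
\item For condition~\eqref{cond2:admissible_max}, note that the convention $g(0,t)=g(t,0)=0$ agrees with the polynomial $\lambda ab$. A direct expansion gives
\[
g(a+c,b+d)-g(a,b)-g(c,d)=\lambda(ad+bc).
\]
\item It remains to see that $ad+bc>0$ under the hypotheses. If $ad=0$ and $bc=0$, then some variable in $\{a,d\}$ and some variable in $\{b,c\}$ vanish. Every such pair, namely $\{a,b\}$, $\{a,c\}$, $\{d,b\}$ and $\{d,c\}$, has positive sum by assumption. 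This is a contradiction.
\end{enumerate}

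Hence Theorem~\ref{theo:admissible_max} gives admissibility.

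\textbf{Necessity.} Write a general symmetric polynomial of degree at most two as
\[
g(a,b)=\alpha(a+b)^2+\beta ab+\gamma(a+b)+\delta .
\]
Assume $\Gamma$ is admissible. By Theorem~\ref{theo:admissible_max}, conditions~\eqref{cond1:admissible_max} and~\eqref{cond2:admissible_max} hold, and so the identity of Lemma~\ref{lm:con1} holds for all $a,b,c,d\in\zahl_+$.

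One caution: the extended function equals the polynomial only when both arguments are positive, because on the axes it is forced to be $0$. I therefore plan two kinds of substitutions.

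\begin{enumerate}[{\rm (1)}]
\item Take $a,b,c,d\ge1$. Then every term in the identity is the polynomial itself. The quadratic terms $\alpha(a+b+c+d)^2$ cancel. The $\beta$ terms give
\[
(a+b)(c+d)+ab+cd=(a+c)(b+d)+ac+bd,
\]
which holds identically, so $\beta$ is unconstrained. The linear $\gamma$ terms compare $\gamma(a+b+c+d)+\gamma(a+b)+\gamma(c+d)$ with the corresponding sum on the other side. Both sides equal $2\gamma(a+b+c+d)$, so these cancel too. The constant $\delta$ appears three times on each side and cancels. The $\alpha$ terms compare $(a+b)^2+(c+d)^2$ with $(a+c)^2+(b+d)^2$, i.e.\ $2ab+2cd$ with $2ac+2bd$. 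Choosing $a=2$, $b=c=d=1$ gives $6$ versus $6$, so this does not separate them; instead take $a=3$, $b=2$, $c=d=1$, which gives $2(6+1)=14$ versus $2(3+2)=10$. This forces $\alpha=0$.
\item Next use zeros. Take $d=0$ and $a,b,c\ge1$, with $g(c,0)=g(b,0)=0$. The identity becomes
\[
g(a+b,c)+g(a,b)=g(a+c,b)+g(a,c).
\]
With $g=\beta ab+\gamma(a+b)+\delta$, the $\beta$ terms are $\beta[(a+b)c+ab]$ versus $\beta[(a+c)b+ac]$, which are equal. The $\gamma$ terms are $\gamma(2a+2b+c)$ versus $\gamma(2a+b+2c)$, which forces $\gamma=0$. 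The $\delta$ terms are $2\delta$ on each side and cancel.
\item To kill $\delta$, use $c=d=0$: this gives $g(a+b,0)+g(a,b)=g(a,b)+g(a,0)+g(b,0)$, which is trivial. Instead I would use the $n=3$ case of condition~\eqref{cond1:admissible_max} directly on the polynomial. For three points there is one tree shape, so that is trivial as well. Use $n=4$: the trees give $g(3,1)+g(2,1)+g(1,1)$ and $g(2,2)+2g(1,1)$. Their difference is
\[
g(3,1)+g(2,1)-g(2,2)-g(1,1)=\beta(3+2-4-1)+\delta(1+1-1-1)=0.
\]
So that fails too. The cleaner route is $\Gamma^*(2)=g(1,1)=\beta+\delta$ together with the recursion for $\Gamma^*$, or Lemma~\ref{lm:con1} with $b=0$ and $a,c,d\ge1$. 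I expect that to give $\delta$ counted a different number of times on each side; if it yields nothing new, I will fall back on condition~\eqref{cond2:admissible_max} directly.
\end{enumerate}

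The main obstacle is precisely this elimination of $\delta$. Condition~\eqref{cond1:admissible_max} appears to allow $g=\beta ab+\delta$, since the root-split count is fixed at $n-1$ nodes, each contributing a constant. So $\delta$ must instead be killed by condition~\eqref{cond2:admissible_max}. Setting $a=c=1$, $b=1$, $d=0$ gives
\[
g(2,1)>g(1,1)+g(1,0),\quad\text{i.e.}\quad 2\beta+\delta>\beta+\delta,
\]
which says only $\beta>0$. Setting $a=d=1$, $b=c=0$ (the hypotheses hold) gives
\[
g(1,1)>g(1,0)+g(0,1)=0,
\]
i.e.\ $\beta+\delta>0$, which again does not determine $\delta$. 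More generally, with all four variables at least $1$, we get
\[
\beta(ad+bc)+\delta-2\delta>0,
\]
which allows small $\delta$. So unless I have mis-handled something, $\delta$ genuinely survives. In that case the claimed form $\lambda ab$ would need the paper's intended reading that a polynomial $g$ vanishes on the axes, i.e.\ $g(0,t)=0$ as a polynomial. Under that reading $\delta=g(0,0)=0$ and $\gamma t+\alpha t^2=0$ immediately.

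My plan is to adopt that reading, take $\lambda=\beta$, and note that $\lambda>0$ follows from $g(2,1)>g(1,1)+g(1,0)$.
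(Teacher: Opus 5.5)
Your sufficiency argument is the paper's. For necessity you take a different route. The paper substitutes the ansatz $\lambda_1(a+b)^2+\lambda_2ab+\lambda_3(a+b)$ into the two equations coming from the three tree shapes on five points under uniform similarity, and reads off $5\lambda_1+\lambda_3=0=7\lambda_1+\lambda_3$. You instead feed the general quadratic into the identity of Lemma~\ref{lm:con1}. Positive arguments kill $\alpha$, and one vanishing argument kills $\gamma$, because there the convention $g(t,0)=0$ breaks the symmetry between the two sides. Both routes are valid. Yours derives the constraints from the general identity rather than from one hand-computed example. It also shows why condition~\eqref{cond1:admissible_max} cannot see a constant: every cluster tree on $n$ leaves has $n-1$ internal nodes, so adding $\delta$ to $g$ shifts every uniform-case value by $(n-1)\delta$. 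The paper's five-point equations, with two $g$-terms on each side, are equally blind to $\delta$.

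The obstacle you hit is genuine, and it is a defect of the statement, not of your argument. The paper's proof avoids it only because its ``general'' quadratic has no constant term. For example, take $g(a,b)=ab+1$:
\begin{enumerate}[{\rm (1)}]
\item Every tree has the same uniform value $\binom{n}{2}+n-1$, so condition~\eqref{cond1:admissible_max} holds.
\item The difference $g(a+c,b+d)-g(a,b)-g(c,d)$ equals $ad+bc-1\ge1$ when $a,b,c,d\ge1$. When one variable vanishes it equals $bc$ or $ad$, which is positive. So condition~\eqref{cond2:admissible_max} holds.
\end{enumerate}
By Theorem~\ref{theo:admissible_max}, $\Gamma$ is therefore admissible, although $g$ is not of the form $\lambda ab$. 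In fact your computation shows that, among all symmetric quadratics, conditions~\eqref{cond1:admissible_max} and~\eqref{cond2:admissible_max} hold exactly for $g=\beta ab+\delta$ with $\beta>0$ and $\delta<2\beta$; the instance $a=b=c=d=1$ is what forces $\delta<2\beta$.

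So the theorem is correct only under the tacit hypothesis that $g$ has zero constant term, and under that hypothesis your proof is complete. However, you should assume only $g(0,0)=0$. Your steps (1) and (2) already give $\alpha=\gamma=0$. Postulating that $g(0,t)$ vanishes identically presupposes this and makes those steps pointless.

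One slip: $a=d=1$, $b=c=0$ is not an allowed instance of condition~\eqref{cond2:admissible_max}, since $b+c=0$. The hypotheses permit at most one zero among $a,b,c,d$, so this instance does not yield $\beta+\delta>0$.
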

\begin{proof}
First, we prove the sufficiency. The condition~\eqref{cond1:admissible_max} in 
Theorem~\ref{theo:admissible_max} holds by Lemma~\ref{lem:Gamma_clique_max}. 
Furthermore, since $\lambda>0$, for any $a,b,c,d\in\zahl_+$ satisfying the conditions in~\eqref{cond2:admissible_max}, we have
\begin{align*}
g(a+c,b+d)-g(a,b)-g(c,d)
&= \lambda\{(a+c)(b+d)-ab-cd\}\\
&= \lambda(ad+bc)>0.
\end{align*}
The last inequality follows from the assumptions
$a+b>0$, $c+d>0$, $a+c>0$, and $b+d>0$.
Thus, the condition~\eqref{cond2:admissible_max} in Theorem~\ref{theo:admissible_max}  also holds. By Theorem~\ref{theo:admissible_max}, $\Gamma$ is admissible.
\par
Next, we prove the necessity. Suppose that $\Gamma$ is admissible for every finite set $X$. By 
Theorem~\ref{theo:admissible_max}, conditions~\eqref{cond1:admissible_max} and \eqref{cond2:admissible_max} hold. Any symmetric polynomial $g(a,b)$ of degree at most two 
can be expressed as 
\begin{align} \label{equ:symmetric}
        g(a,b) = \lambda_1 (a+b)^2 + \lambda_2 ab + \lambda_3 (a+b).
\end{align}
Since admissibility is assumed for every finite set $X$, we may in particular consider the case $|X|=5$. 
In this case, there are only three cluster trees $T_1, T_2$ and $T_3$ as shown 
in Figure~\ref{fig:tree5_max} up to permutations of the leaf set. Assuming $M(x,y) = 1$ for all distinct $x, y \in X$, we have 
\begin{align*}
        \Gamma(T_1) &= g(4,1) + g(3,1) + g(2,1) + g(1,1),\\
        \Gamma(T_2) &= g(4,1) + g(2,2) + 2g(1,1),\\
        \Gamma(T_3) &= g(3,2) + g(2,1) + 2g(1,1).
\end{align*}
By the condition~\eqref{cond1:admissible_max}, $\Gamma(T_1) = \Gamma(T_2) = \Gamma(T_3)$. Thus, we obtain the following equations
\begin{align*}
        g(3,1) + g(2,1) &= g(2,2) + g(1,1),\\
        g(4,1) + g(2,2) &= g(3,2) + g(2,1).
\end{align*}
Substituting equation \eqref{equ:symmetric} into these equations, we obtain
\begin{align*}
        5\lambda_1 + \lambda_3 &= 0,\\
        7\lambda_1 + \lambda_3 &= 0.
\end{align*}
    From these, we deduce $\lambda_1 = \lambda_3 = 0$. Thus, $g$ is of the form
\begin{align*}
        g(a,b) = \lambda ab.
\end{align*}
In addition, $\lambda$ must be positive, since the condition~\eqref{cond2:admissible_max} would be violated otherwise. 
\end{proof}
%%%%%%%%%%%%%%%%%%%%%%%%%%%%%%%%%%%%%%%%%%%%%%%%%%%%%%%%%%%%%%%
\section{Concluding Remarks}

Hierarchical clustering aims to uncover a hierarchical structure of partitions, often represented as a dendrogram or cluster tree, from a data set equipped with pairwise similarities.
Dasgupta~\cite{Dasgupta16} introduced an objective function for evaluating cluster trees and formulated hierarchical clustering as an optimization problem, showing that minimizing this objective is NP-hard.

Cohen-Addad et al.~\cite{CKMM} subsequently introduced the notion of admissible objective functions for hierarchical clustering and gave a necessary and sufficient condition for admissibility within the class of sum-type objective functions~\eqref{Gamma:sum}. 
Although the characterization of Cohen-Addad et al. applies to a general family of objective functions, apart from Dasgupta's original objective function, no explicit admissible objective functions in this family were provided.

In this paper, we studied admissible objective functions for hierarchical clustering in two distinct classes.
For sum-type objective functions, we provided a complete characterization of admissibility when the associated scaling function $g$ is a polynomial of degree at most two, and derived sufficient conditions for admissibility in the degree-three case.
Moreover, for admissible sum-type objective functions, we showed that the recursive sparsest cut algorithm of Dasgupta~\cite{Dasgupta16} achieves an $\order(\phi)$-approximation ratio, where $\phi$ denotes the approximation ratio of the sparsest cut subroutine.

We also introduced a new class of objective functions for hierarchical clustering, termed max-type objective functions.
For this class, we established a general characterization of admissibility without imposing specific functional restrictions on $g$, and further obtained a complete characterization in the case where $g$ is a symmetric polynomial of degree at most two.

These results provide new explicit admissible objective functions and clarify admissibility within both sum-type and max-type classes.
An interesting direction for future work is to compare the theoretical and practical behavior of different admissible objective functions.
Another important open problem is to clarify the computational complexity and approximability of minimizing max-type objective functions. 
%%%%%%%%%%%%%%%%%%%%%%%%%%%%%%%%%%%%%%%%%%%%%%%%%%%%%%%%%%%%%%%%
%%%%%%%%%%%%%%%%%%%%%%%%%%%%%%%%%%%%%%%%%%%%%%%%%%%%%%%%%%%%%%%
%\section*{Acknowledgments}
%%%%%%%%%%%%%%%%%%%%%%%%%%%%%%%%%%%%%%%%%%%%%%%%%%%%%%%%%%%%%%%%
\appendix
\section{The equivalence between the original definition of admissibility and ours} \label{appendixA}
%
% \begin{defe}[Definition]
% 	A function is called admissible if it satisfies the condition: "For any similarity matrix $M$ on $X$ that has a generating tree, the necessary and sufficient condition for a cluster tree $T$ on $X$ to minimize the objective function is that $T$ is the generating tree of $M$."
% \end{defe}
%
The difference between the definition of admissibility in this paper and that of Cohen-Addad et al.~\cite{CKMM} lies in whether the similarity matrix in question is assumed to admit a generating tree or to be generated from an 
ultrametric. Below, we show that these two definitions are equivalent. 
\par 
The following lemma is a similarity-based analogue of the standard representation
theorem for ultrametrics; see, for example, Semple and Steel~\cite{SempleSteel}.
\begin{lemma} \label{lem:GeneratingTree_MyDefe}
    A similarity matrix $M$ on $X$ has a generating tree if and only if for all distinct $x, y, z \in X$,
\begin{align}
M(x,z) \geq \min\{M(x,y), M(y,z)\}. \label{cond:tri_reverse}
\end{align}
\end{lemma}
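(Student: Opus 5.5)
The two implications are proved separately: necessity by reading the condition off the weights, and sufficiency by induction on $|X|$, building the tree top-down by splitting at the minimum similarity.

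\emph{Necessity.} Let $T$ be a generating tree with weight function $h$. Take distinct $x,y,z$ and let $u=\lca_T(x,y)$, $v=\lca_T(y,z)$, $w=\lca_T(x,z)$. In a rooted tree, two of these three nodes coincide and the third is a descendant of (or equal to) them; equivalently, $w$ is an ancestor of neither-or-both as follows. Both $u$ and $v$ are ancestors of $y$, so they are comparable. Suppose without loss of generality that $v$ is an ancestor of $u$. Then $v$ is a common ancestor of $x$ and $z$, so $w$ is a descendant of $v$. Since $h$ is nondecreasing going down the tree, $h(w)\ge h(v)$, which gives
\[
M(x,z)=h(w)\ge h(v)=M(y,z)\ge \min\{M(x,y),M(y,z)\}.
\]
The case where $u$ is an ancestor of $v$ is symmetric.

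\emph{Sufficiency.} I would induct on $n=|X|$; for $n\le 2$ the claim is trivial. Let $m=\min_{x\neq y}M(x,y)$ and define a graph $G$ on $X$ whose edges are the pairs with $M(x,y)>m$.

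The key step, and the main obstacle, is to show that $G$ is disconnected. By \eqref{cond:tri_reverse}, the relation ``$x=y$ or $M(x,y)>m$'' is transitive: if $M(x,y)>m$ and $M(y,z)>m$, then $M(x,z)\ge\min\{M(x,y),M(y,z)\}>m$. So it is an equivalence relation, and its classes are exactly the connected components of $G$, each of which is a clique in $G$. If there were only one class, every pair would have $M>m$, contradicting the choice of $m$. Hence there are at least two classes.

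Choose one class $Y$ and set $Z=X\setminus Y$. Every pair across the cut $\{Y,Z\}$ has $M=m$. The principal submatrices $M_Y$ and $M_Z$ still satisfy \eqref{cond:tri_reverse}, so by induction they have generating trees $T_Y,T_Z$ with weights $h_Y,h_Z$.

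Build $T$ with a new root $r$ whose children are the roots of $T_Y$ and $T_Z$, and set $h(r)=m$, $h=h_Y$ on $T_Y$, and $h=h_Z$ on $T_Z$. Condition (ii) holds: cross pairs have their lca at $r$, where $h(r)=m=M(x,y)$, and pairs inside $Y$ or $Z$ are handled by the induction hypothesis. For monotonicity (i), every internal node of $T_Y$ is $\lca$ of some pair in $Y$, so its weight is some $M(x,y)\ge m$ (and similarly for $T_Z$). Thus $h(r)=m$ is at most the weights of its children.

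Finally, nonnegativity of $h$ is inherited from $M$, which is nonnegative by definition of a similarity matrix.
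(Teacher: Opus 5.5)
Your proof is correct, and its sufficiency half runs in the opposite direction from the paper's.

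\textbf{The paper's route.} The paper builds the tree bottom-up, following the classical ultrametric-representation argument. It takes a pair $a,b$ of maximum similarity and notes that the three-point condition forces $M(a,x)=M(b,x)$ for every other $x$. It then contracts $a,b$ into a new point $x_{ab}$, verifies that the contracted matrix $M'$ still satisfies the condition, and hangs $a$ and $b$ below $x_{ab}$ in the tree supplied by induction.

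\textbf{Your route.} You build the tree top-down. At the global minimum $m$, the relation ``$x=y$ or $M(x,y)>m$'' is an equivalence relation with at least two classes. This gives a root bipartition whose cross-similarities all equal $m$, and you recurse on both sides.

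\textbf{Comparison.} Your induction only passes to principal submatrices, which inherit the condition verbatim. So there is no auxiliary matrix on a modified ground set whose triples involving the new point must be rechecked, and the base case is trivial. Monotonicity at the new root is also immediate, because $m$ is a global minimum and every internal weight in the subtrees is an entry of $M$. The paper's version, in exchange, mirrors agglomerative clustering and removes exactly one point per step.

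\textbf{Necessity.} Your necessity argument is essentially the paper's case analysis. You make it self-contained by deriving the comparability of $\lca_T(x,y)$ and $\lca_T(y,z)$ from the fact that both are ancestors of $y$. The sentence about $w$ being ``an ancestor of neither-or-both'' is garbled and can simply be deleted, since nothing depends on it.
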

\begin{proof} 
The argument closely follows the proof of the corresponding representation theorem
for ultrametrics in Semple and Steel~\cite{SempleSteel}, with distances replaced by
similarities and the monotonicity condition reversed accordingly.
\par
[The ``only if'' part:] 
    Assume that $M$ has a generating tree. That is, there exists a cluster tree $T$ on $X$ 
and a weight function $h$ that satisfy the two conditions of a generating tree. 
For any distinct $x, y, z \in X$, since $T$ is a binary tree, two of $\lca_T(x,y)$, $\lca_T(x,z)$, and $\lca_T(y,z)$ are the same internal node, and the other is a descendant of the first two. From the two conditions of a generating tree, if $\lca_T(x,y) = \lca_T(x,z)$, then
    \begin{align*}
        M(x,y) = M(x,z) \leq M(y,z).
    \end{align*}
    Similarly, if $\lca_T(x,y) = \lca_T(y,z)$, then
    \begin{align*}
        M(x,y) = M(y,z) \leq M(x,z).
    \end{align*}
    If $\lca_T(x,z) = \lca_T(y,z)$, then
    \begin{align*}
        M(x,z) = M(y,z) \leq M(x,y).
    \end{align*}
    In all cases, equation \eqref{cond:tri_reverse} holds.
\par
 [The ``if'' part:] We show that if~\eqref{cond:tri_reverse} holds for all distinct $x,y,z\in X$, 
 then $M$ has a generating tree by induction on $|X|$. If $|X| = 3$, let $X = \{x, y, z\}$, and without loss of generality, assume $M(x,z) \geq M(x,y) = M(y,z)$. Let us consider a tree where $x$ and $z$ share a parent $v$, and $v$ and $y$ share a parent $r$. Assign weights $h(v) = M(x,z)$ and $h(r) = M(x,y)$. This tree is a generating tree of $M$.
\par
Let $|X| = n \geq 4$ and assume the ``if'' part of the lemma holds for any 
similarity matrix $M$ on $X$ with $|X| < n$. Let $a, b \in X$ be two distinct 
elements such that $M(a,b)$ is maximized. Then, for any $x \in X \setminus \{a,b\}$, $M(a,b) \geq M(a,x) = M(b,x)$. Let $x_{ab}$ be a new element not in $X$, 
and let $X' = (X \setminus \{a,b\}) \cup \{x_{ab}\}$. Define a similarity matrix $M'\colon X' \times X' \to \real_+$ as follows:
\begin{align*}
M'(x,y) = 
            \begin{cases}
                M(x,y) &\text{if $x,y \in X \setminus \{a,b\}$},\\
                M(a,y) &\text{if $x = x_{ab}$},\\
                M(x,a) &\text{if $y = x_{ab}$}.
            \end{cases}
\end{align*}
 We show that $M'$ satisfies \eqref{cond:tri_reverse} for all distinct $x, y, z \in X'$. If $x, y, z \in X \setminus \{a,b\}$, this is trivial. If $x = x_{ab}$ and $y, z \in X \setminus \{a,b\}$, then
\begin{align*}
        M'(x,z) &= M(a,z)\\
        &\geq \min\{M(a,y), M(y,z)\}\\
        &= \min\{M'(x,y), M'(y,z)\}.
\end{align*}
Therefore, by the induction hypothesis, $M'$ has a generating tree $T'$. Let $h'$ be the weight function for the internal nodes of $T'$ that satisfies the condition for $T'$ to be 
a generating tree of $M'$. Create a new tree $T$ by attaching $a$ and $b$ as children of $x_{ab}$. Assign weights to the internal nodes of $T$ as follows:
\begin{align*}
h(v) =\begin{cases}
                M(a,b) &\text{if $v = x_{ab}$},\\
                h'(v) &\text{if $v \neq x_{ab}$}.
            \end{cases}
\end{align*}
    Let $p$ be the parent of $x_{ab}$. By the definition of $a$ and $b$, $h(p) \leq h(x_{ab})$, so $h$ satisfies the condition~\eqref{cond1:generatingTree}. For any $x \in X \setminus \{a,b\}$, $M(a,x) = M(b,x)$ and $\lca_T(a,x) = \lca_T(b,x)$. To show that $h$ satisfies condition \eqref{cond2:generatingTree}, it suffices to show $h(\lca_T(a,x)) = M(a,x)$. Since $x_{ab}$ is the parent of $a$, we have
    \begin{align*}
        h(\lca_T(a,x)) &= h(\lca_T(x_{ab},x))\\
        &= h'(\lca_T(x_{ab},x))\\
        &= M'(x_{ab},x)\\
        &= M(a,x).
    \end{align*}
\end{proof}
\begin{defe}[Cohen-Addad et al.~\cite{CKMM}]
    A similarity matrix $M$ is called a {\it similarity matrix generated from an ultrametric} if there exists an ultrametric $(X,d)$ and a nonincreasing function $f\colon\real_+ \to \real_+$ such that $M(x,y) = f(d(x,y))$ for all distinct $x, y \in X$.
\end{defe}
\begin{lemma} \label{lem:GeneratingTree_CohenDefe}
    A similarity matrix $M$ on $X$ is generated from an ultrametric if and only if~\eqref{cond:tri_reverse} holds for all distinct $x, y, z \in X$.
\end{lemma}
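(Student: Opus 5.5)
By Lemma~\ref{lem:GeneratingTree_MyDefe}, it suffices to show that $M$ is generated from an ultrametric if and only if $M$ satisfies the reversed triangle condition~\eqref{cond:tri_reverse}. The plan is to prove the two directions separately, and in the converse direction to use the generating tree supplied by Lemma~\ref{lem:GeneratingTree_MyDefe} as a bridge to an ultrametric.

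For the ``only if'' part, suppose $M(x,y)=f(d(x,y))$ with $d$ an ultrametric and $f$ nonincreasing. Fix distinct $x,y,z$. The ultrametric inequality gives $d(x,z)\le\max\{d(x,y),d(y,z)\}$. Since $f$ is nonincreasing, it maps a maximum to a minimum:
\[
f(\max\{s,t\})=\min\{f(s),f(t)\}.
\]
Applying $f$ to the ultrametric inequality therefore yields
\[
M(x,z)=f(d(x,z))\ge f(\max\{d(x,y),d(y,z)\})=\min\{M(x,y),M(y,z)\}.
\]

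For the ``if'' part, assume~\eqref{cond:tri_reverse}. By Lemma~\ref{lem:GeneratingTree_MyDefe}, $M$ has a generating tree $T$ with weight function $h$. The natural first attempt is to take $f(s)=\max\{C-s,0\}$ and $d(x,y)=C-M(x,y)$ for $x\ne y$, with $d(x,x)=0$. Here $C$ is a constant chosen strictly larger than $\max_{x\ne y}M(x,y)$, so that $d$ is positive off the diagonal; one can take $C$ equal to that maximum plus $1$. With this choice, $d$ is nonnegative, symmetric, and vanishes exactly on the diagonal. The ultrametric inequality $d(x,z)\le\max\{d(x,y),d(y,z)\}$ is exactly~\eqref{cond:tri_reverse} rewritten through $s\mapsto C-s$; the degenerate cases where two of the points coincide are trivial. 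The function $f$ is nonincreasing and maps $\real_+$ into $\real_+$. For distinct $x,y$ we have $d(x,y)\le C$, so $f(d(x,y))=C-d(x,y)=M(x,y)$.

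The generating tree is not actually needed in this construction. Equivalently, one could set $d(x,y)=C-h(\lca_T(x,y))$, which gives the same values.

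The main obstacle is making sure the definition of ultrametric is satisfied exactly. Positivity off the diagonal is what forces the strict choice of $C$. Nonnegativity of $f$ on all of $\real_+$, including arguments larger than $C$, is what forces the truncation $\max\{\cdot,0\}$. The remaining verification is routine.
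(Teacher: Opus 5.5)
Your proof is correct and is essentially the paper's argument. It uses the same ``only if'' step via $f(\max\{s,t\})=\min\{f(s),f(t)\}$ and the same construction $d(x,y)=C-M(x,y)$ off the diagonal with $C=\max_{x\neq y}M(x,y)+1$, so the appeal to Lemma~\ref{lem:GeneratingTree_MyDefe} and the generating tree is superfluous, as you observe yourself. Your truncation $f(s)=\max\{C-s,0\}$ is a small improvement: the paper's $f(\xi)=-\xi+c$ is negative for $\xi>c$, so it does not literally map $\real_+$ into $\real_+$.
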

\begin{proof} 
\noindent[The ``only if'' part:] Suppose that $M$ is generated from an ultrametric. 
Then, there exists an ultrametric $(X,d)$ and a nonincreasing function $f\colon\real_+ \to \real_+$ such that $M(x,y) = f(d(x,y))$ for all distinct $x, y \in X$. Since $d$ satisfies $d(x,z) \leq \max\{d(x,y), d(y,z)\}$ and $f$ is nonincreasing, we have 
\begin{align*}
M(x,z)&=f(d(x,z))\\
 &\geq f(\max\{d(x,y), d(y,z)\})\\
&= \min\{f(d(x,y)), f(d(y,z))\}\\
&=\min\{M(x,y),M(y,z)\}.
\end{align*}
Thus, equation \eqref{cond:tri_reverse} holds.
\par
    [The ``if'' part:] Suppose that $M$ satisfies~\eqref{cond:tri_reverse} 
    for all distinct $x,y,z\in X$.  
Define a constant 
\begin{align*}
c = \max_{x,y \in X, x \neq y} M(x,y) + 1,
\end{align*}
and let $f(\xi) = -\xi + c$. Then, $f$ is nonincreasing. Define $d\colon X\times X\to\real_+$ as follows
\begin{align*}
d(x,y) = \begin{cases}
                -M(x,y) + c &\text{if $x \neq y$},\\
                0 &\text{if $x = y$}.
            \end{cases}
\end{align*}
    We have for all distinct $x, y \in X$, $M(x,y) = -d(x,y) + c = f(d(x,y))$, 
    and for all distinct $x, y, z \in X$,
\begin{align*}
        d(x,z) &= -M(x,z) + c\\
        &\leq -\min\{M(x,y), M(y,z)\} + c\\
        &= \max\{-M(x,y) + c, -M(y,z) + c\}\\
        &= \max\{d(x,y), d(y,z)\}.
\end{align*}
    Thus, $d$ satisfies the strong triangle inequality. By the definitions of $c$ and $d$,  
$d(x,y) \geq 0$ for all 
$x,y\in X$ and $d(x,y) = 0$ if and only if $x = y$. Since $M$ is symmetric, $d(x,y) = d(y,x)$. Therefore, $(X,d)$ is an ultrametric.
\end{proof}
\begin{proposition}
    A similarity matrix $M$ on $X$ has a generating tree if and only if $M$ is generated from an ultrametric.
\end{proposition}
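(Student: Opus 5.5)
The plan is to chain the two preceding lemmas, since both are stated with the same intermediate condition~\eqref{cond:tri_reverse}. By Lemma~\ref{lem:GeneratingTree_MyDefe}, a similarity matrix $M$ on $X$ has a generating tree if and only if $M(x,z)\ge\min\{M(x,y),M(y,z)\}$ for all distinct $x,y,z\in X$. By Lemma~\ref{lem:GeneratingTree_CohenDefe}, the same three-point condition is equivalent to $M$ being generated from an ultrametric. Composing the two equivalences gives the proposition in both directions at once.

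Concretely, for the forward direction I will take $M$ with a generating tree and apply the ``only if'' part of Lemma~\ref{lem:GeneratingTree_MyDefe} to obtain~\eqref{cond:tri_reverse}. The ``if'' part of Lemma~\ref{lem:GeneratingTree_CohenDefe} then produces the ultrametric $d(x,y)=c-M(x,y)$ for $x\ne y$, together with the nonincreasing map $f(\xi)=c-\xi$. For the reverse direction I will take $M=f\circ d$, use the ``only if'' part of Lemma~\ref{lem:GeneratingTree_CohenDefe} to obtain~\eqref{cond:tri_reverse}, and then invoke the ``if'' part of Lemma~\ref{lem:GeneratingTree_MyDefe}.

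There is no real obstacle here; the only points to check are small-case boundary issues. When $|X|\le 2$, condition~\eqref{cond:tri_reverse} is vacuous, while the inductive proof of Lemma~\ref{lem:GeneratingTree_MyDefe} starts at $|X|=3$. So I will remark separately that for $|X|=2$ the unique cluster tree, with root weight $h(r)=M(x,y)$, is a generating tree, and that $d(x,y)=c-M(x,y)>0$ is trivially an ultrametric. I will also note that the weight function must satisfy $h\ge 0$; this holds because $M\ge 0$. After that, the proposition follows, and with it the equivalence between our definition of admissibility and that of Cohen-Addad et al.~\cite{CKMM}: the two definitions quantify over the same class of similarity matrices.
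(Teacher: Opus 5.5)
Your proposal is correct and is exactly the paper's proof: the proposition follows by composing Lemma~\ref{lem:GeneratingTree_MyDefe} and Lemma~\ref{lem:GeneratingTree_CohenDefe} through the common three-point condition~\eqref{cond:tri_reverse}. Your separate check of $|X|=2$ is a correct patch, since the induction in Lemma~\ref{lem:GeneratingTree_MyDefe} starts at $|X|=3$. If you write $f$ out explicitly, use $f(\xi)=\max\{c-\xi,0\}$ rather than $c-\xi$, so that it maps $\real_+$ into $\real_+$ as the definition requires.
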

\begin{proof}
    This follows from Lemmas \ref{lem:GeneratingTree_MyDefe} and \ref{lem:GeneratingTree_CohenDefe}.
\end{proof}
%
%
\begin{comment}
\begin{corollary}\label{PrincipalSubmatrix}
    For a similarity matrix $M$ on $X$ that has a generating tree, any principal submatrix $M_A$ induced by a subset $A \subseteq X$ also has a generating tree.
\end{corollary}
%
\begin{proof}
    By Lemma~\ref{lem:GeneratingTree_MyDefe}, $M$ satisfies~\eqref{cond:tri_reverse} 
    for all distinct $x,y,z\in X$, and hence, we have 
\begin{align*}
        M_A(x,z) &= M(x,z)\\
        &\geq \min\{M(x,y), M(y,z)\}\\
        &= \min\{M_A(x,y), M_A(y,z)\}
\end{align*}
for all distinct $x, y, z \in A$. Therefore, $M_A$ also satisfies~\eqref{cond:tri_reverse}. It follows from Lemma~\ref{lem:GeneratingTree_MyDefe} that $M_A$ has a generating tree. 
\end{proof}
\end{comment}
%%%%%%%%%%%%%%%%%%%%%%%%%%%%%%%%%%%%%%%%%%%%%%%%%%%%%%%%%%
\section{Proofs Missing from Section~\ref{sec:Approximation_sum}}\label{appendixB}
%\subsection{Proof of Lemma~\ref{lem:f'}}\label{appendixB1}
%
%
\begin{lemma}\label{lem:f'}
For each integer $t \geq 2$, we have $f'(t) > 0$.
\end{lemma}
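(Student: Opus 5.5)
The plan is to reduce positivity of $f'(t)$ to positivity of the increments $f(s)$. By the definition of $f'$ in the proof of Lemma~\ref{lm:4.4},
\[
f'(t)=\sum_{s=\lfloor t/4\rfloor}^{\lfloor t/2\rfloor-1} f(s).
\]
So it suffices to prove two things: the sum is nonempty, and $f(s)>0$ for every $s\in\zahl_+$.

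Nonemptiness means $\lfloor t/2\rfloor-1\ge\lfloor t/4\rfloor$, i.e.\ $\lfloor t/2\rfloor>\lfloor t/4\rfloor$ for $t\ge 2$. For $t=2,3$ the range is $\{0\}$. For $t\ge 4$ we have $\lfloor t/2\rfloor\ge \lfloor t/4\rfloor+\lfloor t/4\rfloor\ge\lfloor t/4\rfloor+1$. This step is routine.

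For the increments, write $G(m)=g(\lfloor m/2\rfloor,\lceil m/2\rceil)$ for $m\in\zahl_+$, so that $f(s)=G(s+1)-G(s)$. The identity \eqref{eq:g(k,n-k)} is a polynomial identity in $k$ and $n$, so it stays valid on the extended domain $\zahl_+$, including $n=0$. Taking $k=\lfloor n/2\rfloor$ gives $(k-n/2)^2=\varepsilon_n/4$, where $\varepsilon_n\in\{0,1\}$ is the parity of $n$. Hence
\[
G(n)=\frac{\lambda}{4}\bigl(3n^3+\varepsilon_n n\bigr)+\frac{\mu}{4}\bigl(7n^2+\varepsilon_n\bigr)+\nu n .
\]
Therefore $f(s)=\lambda A_s+\mu B_s+\nu$, where
\[
A_s=\frac{3(3s^2+3s+1)+(s+1)\varepsilon_{s+1}-s\varepsilon_s}{4},\qquad
B_s=\frac{7(2s+1)+\varepsilon_{s+1}-\varepsilon_s}{4}.
\]

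The key step is to show $A_s\ge 1$ and $B_s\ge 2$ for all $s\ge 0$. At $s=0$ we get $A_0=1$ and $B_0=2$ exactly. For $s\ge1$, use the worst case of the parity terms, which is $-s$ in $A_s$ and $-1$ in $B_s$. This gives $A_s\ge(9s^2+8s+3)/4\ge 1$ and $B_s\ge(14s+6)/4\ge 2$.

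Since $\lambda,\mu\ge 0$ by \eqref{equ:lmn_appr}, it follows that
\[
f(s)\ge \lambda+2\mu+\nu=g(0,1)>0 .
\]
This uses exactly the third assumption in \eqref{equ:lmn_appr}, which explains why that stronger condition was imposed. Summing over the nonempty range yields $f'(t)>0$.

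The main obstacle is only bookkeeping of the floor/ceiling parities in $G$. Passing through \eqref{eq:g(k,n-k)} reduces them to the single correction term $\varepsilon_n$. After that, the worst-case bound on the parity terms settles the inequality.
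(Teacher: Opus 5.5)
Your argument is correct, but it is organized differently from the paper's. The paper works with the telescoped form, which in your notation is $f'(t)=G(\lfloor t/2\rfloor)-G(\lfloor t/4\rfloor)$. It first shows $g(a+1,b)-g(a,b)\ge\lambda+2\mu+\nu>0$ for all $a,b\ge 0$, so $g$ is strictly increasing in each coordinate. Writing $t=4m+r$, it then checks in four cases (parity of $m$, and $r\in\{0,1\}$ versus $r\in\{2,3\}$) that the balanced pair for $\lfloor t/2\rfloor$ dominates the one for $\lfloor t/4\rfloor$ coordinatewise, with at least one strict inequality.

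You instead keep $f'$ as a sum and split the claim into two parts: the index range is nonempty, and each term satisfies $f(s)\ge\lambda+2\mu+\nu$. You get the termwise bound from a closed form for $G$ derived via \eqref{eq:g(k,n-k)}, which avoids the mod-$4$ case split. Your route also yields a fact the paper uses only implicitly. Both \eqref{equ:charge_A} and the summation argument in Lemma~\ref{claim:2} multiply inequalities by $f(t)$, so they need $f(t)\ge 0$ for every $t$, and that is exactly your termwise bound. The paper's decomposition, in turn, is reused in Lemma~\ref{lem:c_g}, where the same four cases give explicit cubics in $m$ for $f'(t)$.

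Incidentally, your parity bookkeeping for $A_s$ and $B_s$ can be bypassed. The pairs $(\lfloor s/2\rfloor,\lceil s/2\rceil)$ and $(\lfloor (s+1)/2\rfloor,\lceil (s+1)/2\rceil)$ differ by $1$ in exactly one coordinate. Hence $f(s)\ge\lambda+2\mu+\nu$ follows at once from the paper's increment bound and the symmetry of $g$.
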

\begin{proof}
Using an argument similar to that in the proof of Proposition~\ref{theo:3jiika_sum}, and the coefficient 
condition~\eqref{equ:lmn_appr}, we have 
\begin{align*}%\label{equ:g-g}
g(a+1,b) - g(a,b) &= \lambda(3a^2 + 2b^2 + 4ab + 3a + 2b + 1) + \mu(4a + 3b + 2) + \nu\\
&\geq \lambda + 2\mu+ \nu\\
&>0
\end{align*}
for all $a,b\geq 0$. Thus, $g$ is strictly increasing in its first argument, and by symmetry also in its second argument.
\par
Let $t\geq 2$ and write $t = 4m + r$ for some $m\in\zahl_+$ and $r=0,1,2,3$.
Then, we have 
\begin{align*}
\left\lfloor\left\lfloor\frac{t}{2}\right\rfloor/2\right\rfloor&=m,\\
\left\lceil\left\lfloor\frac{t}{2}\right\rfloor/2\right\rceil&=\begin{cases} m &\text{if $r=0,1$},\\
                                 						m+1 &\text{if $r=2,3$},
						\end{cases}\\
\left\lfloor\left\lfloor\frac{t}{4}\right\rfloor/2\right\rfloor&=\left\lfloor\frac{m}{2}\right\rfloor,\\
\left\lceil\left\lfloor\frac{t}{4}\right\rfloor/2\right\rceil&=\left\lceil\frac{m}{2}\right\rceil,
\end{align*}
and hence, 
\begin{align}\label{equ:r=01_23}
        f'(t) = 
        \begin{cases}
            g\left(m, m\right) - g\left(\left\lfloor\frac{m}{2}\right\rfloor, \left\lceil\frac{m}{2}\right\rceil\right) & \text{if $r = 0, 1$},\\
            g\left(m, m+1\right) - g\left(\left\lfloor\frac{m}{2}\right\rfloor, \left\lceil\frac{m}{2}\right\rceil\right) & \text{if $r = 2, 3$}.
\end{cases}
\end{align}
Furthermore, we have 
\begin{align}
        \left(\left\lfloor\frac{m}{2}\right\rfloor, \left\lceil\frac{m}{2}\right\rceil\right) = \left\{
        \begin{array}{lll}
            \left(\frac{m}{2}, \frac{m}{2}\right) & \text{if} & \text{$m$ is even},\\
            \left(\frac{m-1}{2}, \frac{m+1}{2}\right) & \text{if} & \text{$m$ is odd}.
        \end{array}
        \right. \label{equ:m_is_e_o}
\end{align}
    Using~\eqref{equ:r=01_23} and~\eqref{equ:m_is_e_o}, we have four cases:
\par
    \noindent[Case: $m \geq 2$ is even, $r = 0, 1$]
    \begin{align*}
        f'(t) = g\left(m, m\right) - g\left(\frac{m}{2}, \frac{m}{2}\right).
    \end{align*}
    Since $m > m/2$, $f'(t) > 0$ by the monotonicity of $g$.
    \par
    \noindent[Case: $m \geq 0$ is even, $r = 2, 3$]
    \begin{align*}
        f'(t) = g\left(m, m+1\right) - g\left(\frac{m}{2}, \frac{m}{2}\right).
    \end{align*}
    Since $m \geq m/2$ and $m+1 > m/2$, $f'(t) > 0$ by the monotonicity of $g$.
    \par
    \noindent[Case: $m \geq 1$ is odd, $r = 0, 1$]
    \begin{align*}
        f'(t) = g\left(m, m\right) - g\left(\frac{m-1}{2}, \frac{m+1}{2}\right).
    \end{align*}
    Since $m > \frac{m-1}{2}$ and $m \geq \frac{m+1}{2}$, $f'(t) > 0$ by the monotonicity of $g$.
    \par
    \noindent[Case: $m \geq 1$ is odd, $r = 2, 3$]
    \begin{align*}
        f'(t) = g\left(m, m+1\right) - g\left(\frac{m-1}{2}, \frac{m+1}{2}\right).
    \end{align*}
    Since $m > \frac{m-1}{2}$ and $m+1 > \frac{m+1}{2}$, $f'(t) > 0$ by the monotonicity of $g$.
\end{proof}
%%%%%%%%%%%%%%%%%%%%
%\subsection{Proof of Lemma~\ref{lem:c_g}}\label{appendixB2}
%%%%%%%%%%%%%%%%%%%%%
%%%%%%%%%%%%%%%%%% proof %%%%
%--------------------------------------------------------------------------------------------------------------------------------
\begin{lemma}\label{lem:c_g}
There exists a constant $C$ such that 
\begin{align*}
   \frac{g(s,t-s)}{f'(t)}\leq C
\end{align*}
for all $s,t\in\natu$ such that $t\geq 2$ and $1\leq s\leq t-1$. 
\end{lemma}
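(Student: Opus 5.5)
We need to bound $g(s,t-s)/f'(t)$ uniformly. The plan is to bound the numerator from above by its maximum over $s$, bound the denominator from below by a polynomial of the same growth order in $t$, and then take a finite maximum over small $t$.

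\emph{Numerator.} By Lemma~\ref{lem:1}, for fixed $t$ the numerator is maximized at $s\in\{1,t-1\}$. So it suffices to bound $g(1,t-1)$. From \eqref{eq:g(k,n-k)} with $k=1$, $n=t$,
\[
g(1,t-1)=(\lambda t+\mu)\bigl(\tfrac{t}{2}-1\bigr)^2+\tfrac34\lambda t^3+\tfrac74\mu t^2+\nu t .
\]
Because $\lambda,\mu\ge 0$, this is at most $\lambda t^3+2\mu t^2+|\nu| t$, up to absolute constants.

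\emph{Denominator.} Write $t=4m+r$ as in the proof of Lemma~\ref{lem:f'}. There $f'(t)$ was expressed through \eqref{equ:r=01_23} and \eqref{equ:m_is_e_o} as $g(m,m)$ or $g(m,m+1)$ minus $g(\lfloor m/2\rfloor,\lceil m/2\rceil)$. Plugging the explicit form \eqref{equ:g_approximation}, balanced arguments give $g(p,p)=6\lambda p^3+7\mu p^2+2\nu p$. Hence the leading behaviour is
\[
f'(t)\approx 6\lambda\bigl(m^3-\tfrac{m^3}{8}\bigr)+7\mu\bigl(m^2-\tfrac{m^2}{4}\bigr)+\nu m ,
\]
that is, $f'(t)\ge c_1\lambda m^3+c_2\mu m^2+\nu m/2-O(\lambda m^2+\mu m+|\nu|)$ with $m\approx t/4$. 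Rather than tracking parity cases exactly, I would use the monotonicity increment bound from Lemma~\ref{lem:f'}. Summing $g(a+1,b)-g(a,b)$ along a monotone lattice path from $(\lfloor m/2\rfloor,\lceil m/2\rceil)$ to $(m,m)$ or $(m,m+1)$ takes about $m$ steps. Each step has increment at least $\lambda(\text{quadratic in current coordinates})+\mu(\text{linear})+\nu$, and moreover each is at least $\lambda+2\mu+\nu>0$. Since the coordinates are at least $m/2$ along the path, the sum is at least
\[
\tfrac{m}{2}\bigl(c\lambda m^2+c\mu m\bigr)+\tfrac{m}{2}(\lambda+2\mu+\nu).
\]

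\emph{Combining.} The ratio of the numerator to this lower bound is now controlled term by term:
\begin{itemize}
\item the $\lambda t^3$ term against $\lambda m^3$;
\item the $\mu t^2$ term against $\mu m^2$;
\item the remaining linear part, $\nu t$ against $m(\lambda+2\mu+\nu)$.
\end{itemize}
For the linear part, note $\nu t\le t(\lambda+2\mu+\nu)$ because $\lambda,\mu\ge0$. Each ratio is $O(1)$ once $m\ge1$, i.e.\ $t\ge4$. For the finitely many $t\in\{2,3\}$ and the corresponding $s$, the ratio is finite by Lemma~\ref{lem:f'}, since $f'(t)>0$. Taking the maximum of these finitely many values and the asymptotic constant gives $C$.

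The main obstacle is the denominator lower bound. In particular, $\nu$ may be negative, so the linear term must be absorbed using $\lambda+2\mu+\nu>0$ via the per-step increment bound, not by naive expansion. Handling $m=0$ (where $t=2,3$) separately avoids degenerate paths.
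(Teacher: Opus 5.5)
Your proof is correct, and it handles the denominator differently from the paper. You and the paper share the first steps: both reduce the numerator to $g(1,t-1)$ via Lemma~\ref{lem:1}, and both write $t=4m+r$ so that $f'(t)$ becomes $g(m,m)$ or $g(m,m+1)$ minus $g(\lfloor m/2\rfloor,\lceil m/2\rceil)$.

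From there the routes split:
\begin{itemize}
\item \emph{The paper} computes $f'(t)$ exactly in four parity cases, each a cubic in $m$ with leading coefficient $\tfrac{21}{4}\lambda$. It compares this with the numerator's $64\lambda m^3$ and gets boundedness from the limit $256/21$, plus $f'(t)>0$ for the finitely many remaining $t$. The regimes with $\lambda=0$ are only asserted to go ``in the same manner.''
\item \emph{You} make the monotonicity argument of Lemma~\ref{lem:f'} quantitative by summing first differences along a lattice path. This avoids the case algebra and treats $\lambda=0$, $\mu=0$ and negative $\nu$ in one stroke. It even gives, for $t\ge 4$, a bound with an absolute constant independent of $\lambda,\mu,\nu$; only $t\in\{2,3\}$ needs a parameter-dependent constant. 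What you give up is the sharp asymptotic ratio.
\end{itemize}

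Three small points should be tightened.

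\emph{The $|\nu|$ in the numerator.} Drop it. The exact identity $g(1,t-1)=\lambda t^3+2\mu t^2+\nu t-\lambda t(t-1)-\mu(t-1)$ gives $g(1,t-1)\le\lambda t^3+2\mu t^2+\nu t$. This, with $\nu$ keeping its sign, is what your step $\nu t\le t(\lambda+2\mu+\nu)$ needs. Also, ``up to absolute constants'' is not legitimate for a term that may be negative.

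\emph{Combining the two per-step bounds.} ``At least $\lambda(\cdot)+\mu(\cdot)+\nu$'' and ``at least $\lambda+2\mu+\nu$'' only give their maximum, not the sum you use. Use the exact split
\[
g(a+1,b)-g(a,b)=\bigl[\lambda(3a^2+4ab+2b^2+3a+2b)+\mu(4a+3b)\bigr]+(\lambda+2\mu+\nu),
\]
where the bracket is nonnegative. Steps in the second coordinate follow by symmetry.

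\emph{The starting point of the path.} For odd $m$ it has a coordinate $(m-1)/2<m/2$, which is $0$ when $m=1$, so ``coordinates at least $m/2$'' fails there. Use $a+b\ge m$ along the path instead. Then the bracket is at least $2\lambda(a+b)^2+3\mu(a+b)\ge 2\lambda m^2+3\mu m$. With at least $m$ steps this gives $f'(t)\ge 2\lambda m^3+3\mu m^2+m(\lambda+2\mu+\nu)$ for every $m\ge 1$. Your term-by-term comparison with $t\le 7m$ then yields $g(s,t-s)\le 172\,f'(t)$ for all $t\ge4$.
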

\begin{proof}
Since, by Lemmas~\ref{lem:f'} and~\ref{lem:1}, we have
\begin{align*}
 \frac{g(s,t-s)}{f'(t)}\leq  \frac{g(t-1,1)}{f'(t)},
\end{align*}
it suffices to show that $g(t-1,1)/{f'(t)}\ (t\geq 2)$ is bounded from above. 
\par Expressing $t = 4m + r$ as in the proof of  Lemma~\ref{lem:f'}, we classify the cases for $f'(t)$ using~\eqref{equ:r=01_23} and~\eqref{equ:m_is_e_o}.
\par
\noindent[Case: $m \geq 2$ is even, $r = 0, 1$]
\begin{align}
			f'(t) &= g\left(m, m\right) - g\left(\frac{m}{2}, \frac{m}{2}\right) \notag \\
			&= \frac{21}{4}\lambda m^3 + \frac{21}{4}\mu m^2 + \nu m. \label{equ:e01}
\end{align}
\par	
\noindent[Case: $m \geq 0$ is even, $r = 2, 3$]
\begin{align}
			f'(t) &= g\left(m, m+1\right) - g\left(\frac{m}{2}, \frac{m}{2}\right) \notag \\
			&= \frac{21}{4}\lambda m^3 + \left(9\lambda + \frac{21}{4}\mu\right)m^2 + (5\lambda + 7\mu + \nu)m + (\lambda + 2\mu + \nu). \label{equ:e23}
\end{align}
\par	
\noindent[Case: $m \geq 1$ is odd, $r = 0, 1$]
\begin{align}
			f'(t) &= g\left(m, m\right) - g\left(\frac{m-1}{2}, \frac{m+1}{2}\right) \notag \\
			&= \frac{21}{4}\lambda m^3 + \frac{21}{4}\mu m^2 + \left(\nu - \frac{1}{4}\lambda\right)m - \frac{1}{4}\mu. \label{equ:o01}
\end{align}
\par	
\noindent[Case: $m \geq 1$ is odd, $r = 2, 3$]
\begin{align}
			f'(t) &= g\left(m, m+1\right) - g\left(\frac{m-1}{2}, \frac{m+1}{2}\right) \notag \\
			&= \frac{21}{4}\lambda m^3 + \left(9\lambda + \frac{21}{4}\mu\right)m^2 + \left(\frac{19}{4}\lambda + 7\mu + \nu\right)m + \left(\lambda + \frac{7}{4}\mu + \nu\right). \label{equ:o23}
\end{align}
\par
 From \eqref{equ:e01}, \eqref{equ:e23}, \eqref{equ:o01}, and \eqref{equ:o23}, we see that in each of the four cases, 
 $f'(t)$ is a polynomial in $m$ of the form  
\begin{align}\label{eq:f'expr}
			f'(t) = \frac{21}{4}\lambda m^3 + C_1 m^2 + C_2 m + C_3,
\end{align}
where $C_1, C_2, C_3$ are constants independent of $m$ (but may depend on the case). Also, the numerator $g(t-1, 1)$ 
is bounded from above as follows. 
\begin{align}\label{eq:gexpr}
			g(t-1, 1) %&= \lambda(t^3 - t^2 + t) + \mu(2t^2 - t + 1) + \nu t \\
			&= \lambda t^3 + (2\mu - \lambda)t^2 + (\lambda - \mu + \nu)t + \mu \notag\\
			&= 64\lambda m^3 + 48\lambda rm^2 + 12\lambda r^2m + r^3\lambda + 16(2\mu - \lambda)m^2 + 8(2\mu - \lambda)rm \notag\\
			&+ (2\mu - \lambda)r^2 + 4(\lambda - \mu + \nu)m + (\lambda - \mu + \nu)r + \mu \notag\\
			&\leq 64\lambda m^3 + D_1 m^2 + D_2 m + D_3,
\end{align}
where $D_1, D_2, D_3$ are constants independent of $m$ (since $r\in\{0,1,2,3\}$ takes only finitely many values).
\par
If  $\lambda > 0$, then by~\eqref{eq:f'expr} and~\eqref{eq:gexpr}, we have 
\begin{align*}
	g(t-1, 1)/f'(t) &\leq \frac{64\lambda m^3 + D_1 m^2 + D_2 m + D_3}{\frac{21}{4}\lambda m^3 + C_1 m^2 + C_2 m + C_3} \\
	&= \frac{64\lambda + D_1\frac{1}{m} + D_2\frac{1}{m^2} + D_3\frac{1}{m^3}}{\frac{21}{4}\lambda + C_1\frac{1}{m} + C_2\frac{1}{m^2} + C_3\frac{1}{m^3}} \\
	&\rightarrow \frac{256}{21}
\end{align*}
as $m \rightarrow \infty$. Therefore, $g(t-1, 1)/f'(t) $ is bounded from above.
%\par	
The other cases when $\lambda = 0$ and $\mu > 0$ and when $\lambda = \mu = 0$ and $\nu > 0$ 
can  be handled in the same manner. 
\begin{comment}
If $\lambda = 0$ and $\mu > 0$, then we have
\begin{align*}
	g(t-1, 1)/f'(t) &\leq \frac{32\mu m^2 + D_2 m + D_3}{\frac{21}{4}\mu m^2 + C_2 m + C_3} \\
	&= \frac{32\mu + D_2\frac{1}{m} + D_3\frac{1}{m^2}}{\frac{21}{4}\mu + C_2\frac{1}{m} + C_3\frac{1}{m^2}} \\
	&\rightarrow \frac{128}{21} \quad (m \rightarrow \infty).
\end{align*}
Thus, $g(t-1, 1)/f'(t)$ is bounded from above in this case as well.
\par	
For $\lambda = \mu = 0$ and $\nu > 0$:
\begin{align*}
g(t-1, 1)/f'(t)  &\leq \frac{4\nu m + D_3}{\nu m + C_3} \\
	&= \frac{4\nu + D_3\frac{1}{m}}{\nu + C_3\frac{1}{m}} \\
	&\rightarrow 4 \quad (m \rightarrow \infty).
\end{align*}
Thus, $g(t-1, 1)/f'(t)$ is bounded from above in this case as well.
\end{comment}
\end{proof}
\section*{Acknowledgments}
This work was supported by JSPS KAKENHI Grant Numbers 18K11180 and 22K11921.
\section*{Statements and Declarations}
\noindent \textbf{Competing Interests}
The authors declare that they have no competing interests.\\

\noindent \textbf{Declaration of generative AI and AI-assisted technologies in the manuscript preparation process}
During the preparation of this work, the authors used ChatGPT by OpenAI for language editing, proofreading, and improving the clarity and readability of the manuscript. The authors reviewed and edited the output as needed and take full responsibility for the content of the published article. 
%%%%%%%%%%%%%%%%%%%%%%%%%%%%%%%%%%%%%%%%%%%%%%%%%%%%%%%%%%%%%%%%

%%%%%%%%%%%%%%%%%%%%%%%%%%%%%%%%%%%%%%%%%%%%%%%%%%%%%%%%%
\end{document}